\newcommand{\vect}[1]{{\lowercase{\mbs{#1}}}}
\newcommand{\mbs}[1]{\bm{#1}}
\newcommand{\mat}[1]{{\uppercase{\mbs{#1}}}}
\newcommand{\Id}{\mat{\mathrm{I}}}
\newcommand{\T}{{\scriptscriptstyle\mathsf{T}}}
\renewcommand{\H}{{\scriptscriptstyle\mathsf{H}}}
\renewcommand{\Re}[1][]{\ifthenelse{\isempty{#1}}{\operatorname{Re}}{\operatorname{Re}\left(#1\right)}}
\renewcommand{\Im}[1][]{\ifthenelse{\isempty{#1}}{\operatorname{Im}}{\operatorname{Im}\left(#1\right)}}
\newcommand{\bv}{\vect{b}}
\newcommand{\etv}{\vect{\eta}}
\def\bH{{\mathbf{H}}}
\def\bI{{\mathbf{I}}}
\def\bQ{{\mathbf{Q}}}
\def\bV{{\mathbf{V}}}
\newcommand{\cC}{{\cal C}}
\newcommand{\cN}{{\cal N}}
\def\bb{{\mathbf{b}}}
\def\bd{{\mathbf{d}}}
\def\bee{{\mathbf{e}}}
\def\bg{{\mathbf{g}}}
\def\bh{{\mathbf{h}}}
\def\bs{{\mathbf{s}}}
\def\bv{{\mathbf{v}}}
\def\b0{{\mathbf{0}}}
\def\bbC{{\mathbb{C}}}
\newcommand{\EE}{\mathbb{E}}
\newcommand{\CN}[1][]{\ifthenelse{\isempty{#1}}{\mathcal{N}_{\mathbb{C}}}{\mathcal{N}_{\mathbb{C}}\left(#1\right)}}
\renewcommand{\P}[1][]{\ifthenelse{\isempty{#1}}{\mathbb{P}}{\mathbb{P}\left(#1\right)}}
\newcommand{\E}[1][]{\ifthenelse{\isempty{#1}}{\mathbb{E}}{\mathbb{E}\left(#1\right)}}
\renewcommand{\det}[1][]{\ifthenelse{\isempty{#1}}{\text{det}}{\text{det}\left(#1\right)}}
\newcommand{\trace}[1][]{\ifthenelse{\isempty{#1}}{\text{tr}}{\text{tr}\left(#1\right)}}
\newcommand{\rank}[1][]{\ifthenelse{\isempty{#1}}{\text{rank}}{\text{rank}\left(#1\right)}}
\newcommand{\diag}[1][]{\ifthenelse{\isempty{#1}}{\text{diag}}{\text{diag}\left(#1\right)}}
\def\nn{\nonumber}
\newtheorem{proposition}{Proposition}
\newtheorem{remark}{Remark}
\newtheorem{definition}{Definition}
\newtheorem{theorem}{Theorem}
\newtheorem{lemma}{Lemma}
\newtheorem{assumption}{Assumption}
\newcommand{\tr}{\mathop{\mathrm{tr}}\nolimits}
\newcommand{\al}{\alpha}
\newcounter{enumi_saved}
\DeclareSymbolFont{matha}{OML}{txmi}{m}{it}
\DeclareMathSymbol{\varv}{\mathord}{matha}{118}
\begin{document}
\title{Nuts and Bolts of a Realistic Stochastic Geometric Analysis of mmWave HetNets: Hardware Impairments and Channel Aging}
\author{Anastasios Papazafeiropoulos, Tharmalingam Ratnarajah, Pandelis Kourtessis, and Symeon Chatzinotas   \vspace{2mm} \\
\thanks{Copyright (c) 2015 IEEE. Personal use of this material is permitted. However, permission to use this material for any other purposes must be obtained from the IEEE by sending a request to pubs-permissions@ieee.org.}
\thanks{A. Papazafeiropoulos was with the Institute for Digital  Communications, University of Edinburgh, Edinburgh EH9 3JL, U.K. He is  
now with the Communications and Intelligent Systems Group, University of Hertfordshire, Hatfield AL10 9AB, U.K., and also with SnT,  
University of Luxembourg, L-1855 Luxembourg, Luxembourg. T. Ratnarajah is  with the  Institute for Digital Communications (IDCOM), University of Edinburgh, Edinburgh  EH9 3JL,  U.K. P. Kourtessis is with the Optical Networks Research Group,
University of Hertfordshire, Hatfield AL10 9AB, U. K.  S. Chatzinotas is with SnT,  
University of Luxembourg, L-1855 Luxembourg, Luxembourg. 
E-mails:  tapapazaf@gmail.com, t.ratnarajah@ed.ac.uk, p.kourtessis@herts.ac.uk, symeon.chatzinotas@uni.lu.}
\thanks{This work was supported in part by the U.K. Engineering and Physical  
Sciences Research Council (EPSRC) under Grant EP/N014073/1, and in part by UK-India Education and Research Initiative Thematic  
Partnerships under Grant DST-UKIERI-2016-17-0060 and by FNR, Luxembourg, through the CORE project ECLECTIC.}}\maketitle
 \vspace{-1cm}
\begin{abstract}
Motivated by  heterogeneous networks (HetNets) design in improving coverage and by millimeter-wave (mmWave) transmission  offering an abundance of extra spectrum, we present a general analytical framework shedding light to the downlink of realistic mmWave HetNets consisting of $K$ tiers of randomly located base stations (BSs). Specifically, we model, by virtue of stochastic geometry tools, the \textit{multi-tier} \textit{multi-user multiple-input multiple-output (MU-MIMO) mmWave} network degraded by the inevitable \textit{residual additive  transceiver hardware impairments (RATHIs)} and \textit{channel aging}. Given this setting, we derive the coverage probability and the area spectral efficiency (ASE), and we subsequently  evaluate the impact of residual transceiver hardware impairments (RTHIs) and channel aging on these metrics. Different path-loss laws for line-of-sight (LOS) and non-line-of-sight (NLOS) are accounted for the analysis, which are among the distinguishing features of mmWave systems. Among the findings, we show that the RATHIs have a meaningful impact at the high signal-to-noise ratio (SNR) regime, while the transmit additive distortion degrades further than the receive distortion the system performance. Moreover, serving fewer users proves to be preferable, and the more directive the mmWaves are, the higher the ASE becomes.
\end{abstract}
\begin{keywords}
Heterogeneous MIMO networks, millimeter wave transmission systems, channel aging, transceiver hardware impairments, performance analysis.
\end{keywords}

\section{Introduction}
Current cellular networks have manifested an exponential increase in traffic load~\cite{Index2013}. In this direction, among the dominant effective ways to increase the network capacity in the forthcoming fifth generation (5G) networks is the cell densification, which reduces user distance  since the base station (BS) density becomes very large~\cite{Osseiran2014}. In fact, the advent of hotspots has improved the coverage and  spatial reuse,  has achieved efficient offloading of the traffic,  and has boosted the spectral efficiency per unit area. In order to avoid Monte-Carlo simulations, tractable and accurate models have been  introduced via the theory of Poisson point processes (PPPs) to describe the randomness concerning the locations of the BSs. Having started from the downlink of single-input single-output (SISO) systems~\cite{Andrews2011}, an extension to multi-tier network has been encountered in~\cite{Dhillon2012}. In a similar way, the coexistence of MIMO and stochastic geometry in  heterogeneous networks (HetNets) has been actualized in~\cite{Dhillon2013}.

In a parallel avenue, another key technology, aiming to achieve the increased capacity demand, is millimeter-wave (mmWave) transmission systems. Such systems offer large portions of the unused spectrum, which can be exploited for boosting the data rate~\cite{Rappaport2014,Pi2011,Rappaport2013}. Commercial wireless systems such as the IEEE 802.11ad for local area networking~\cite{Alejos2008} have already considered the mmWave band ranging from 30 GHz to 300 GHz, where field measurements have taken place recently~\cite{Pi2011,Alejos2008}. 

Notably, mmWave cellular systems employ large antenna arrays to  benefit from the application of beamforming that can compensate the frequency-dependent path-loss, and reduce the out-of-cell interference~\cite{Rappaport2014}.  At the same time, the smaller the  wavelength, the smaller the antenna aperture to be used. Hence, multiple antenna elements can be packed into a smaller volume, i.e., mmWave transmission is required to implement hundreds or thousands of antennas to a practical cost-efficient BS.   Unfortunately,  mmWave communication systems become more sensitive to blockage effects with increasing frequency~\cite{Alejos2008}. Different materials have different penetration losses~\cite{Alejos2008}. Interestingly, channel measurements with directional antennas have revealed that blockages cause substantial differences in the line-of-sight (LOS) paths and non-line-of-sight (NLOS) path-loss features~\cite{Pi2011,Rappaport2013a,Rajagopal2012}. In other words, in the case of a comprehensive system analysis for mmWave systems, the characteristics of the propagation environment should be accurately accounted for because the use of directional beamforming does not allow the application of results from the conventional analysis. 

The application of stochastic geometry to study mmWave cellular networks is limited to  a few substantial works~\cite{Akoum2012,DiRenzo2015,Turgut2017,Rebato2017}. For example, reference~\cite{Akoum2012} presents a basic work, where directional beamforming is considered for single and multiple users describing a simplified path-loss model  without considering any mmWave propagation characteristics. Similarly, in~\cite{DiRenzo2015}, the author assumed that the actual array beam pattern follows a step function with a constant main lobe over the beamwidth and a constant side lobe otherwise. Moreover, generalized fading (Nakagami-\textit{m}) was considered as a suitable distribution to model the LOS and NLOS components. Another example is~\cite{Turgut2017}, where  authors exploited the directional beamforming in mmWave cellular networks to improve the coverage probability by increasing the main lobe gain.






Remarkably, HetNets massive MIMO, and mmWave transmission  are currently mostly studied in isolation~\cite{Osseiran2014},  however, it is expected they will coexist in 5G systems and beyond to address some of the critical challenges. Although these solutions target the implementation of 5G networks, a number of technical misconceptions and challenges are met and remain unsolved. A fundamental example is the impact of the detrimental unavoidable residual transceiver hardware impairments (RTHIs) is a highly active area of great industrial interest, which has not been taken into account in next-generation networks. Unfortunately, although 5G networks, and especially, mmWave systems and HetNets should be the cynosure regarding the study of residual transceiver hardware impairments (RTHIs), the situation is different as a literature survey reveals. In the majority of  literature,  ideal transceiver hardware is assumed which is far form realistic, considering the inevitability of RTHIs~\cite{Schenk2008,Studer2010,Goransson2008,Bjoernson2013,PapazafeiropoulosMay2016,PapazafeiropoulosJuly2016,PapazafComLetter2016,PapazafeiropoulosJuly2016,Papazafeiropoulos2017a,Papazafeiropoulos2017,Papazafeiropoulos2017c,Bjornson2015,Papazafeiropoulos2016}\footnote{The RTHIs denote the amount of distortions, which occur by the partially mitigation of the transceiver hardware because although real-world applications employ calibration schemes at the transmitter and  compensation algorithms at the receiver, the transceiver hardware impairments are only partially mitigated~\cite{Schenk2008,Bjornson2015,Papazafeiropoulos2016}}.  In particular, the additive RTHIs (RATHIs), which describe the aggregate effect of many impairments,  are modeled as additive Gaussian noises at both the BS and user's side~\cite{Schenk2008,Studer2010}. Note that the Gaussian model is adopted because of its analytical tractability and experimental validation~\cite{Schenk2008}.

In this context, the relative movement between  antennas and  scatterers, which is common in practical systems, results in  channel variation between what is learned via estimation and what is used for precoding or detection~\cite{Papazafeiropoulos2015a,Papazafeiropoulos2016}. This effect is known as channel aging, and its study appears  a gap in the 5G literature. Notably, the lack of study  of channel aging becomes more significant in outdoor urban environments that are characterized by  increased mobility. Especially, this work, including the concept of channel aging during mmWave transmission is quite meaningful since mmWave systems are very sensitive to outdoor communications with high velocities~\cite{Rappaport2013}.  Hence, given that user mobility is one of the main causes for the inevitably imperfect channel state information at the transmitter (CSIT), it should be taken seriously into account.

\subsection{Motivation}
Most  existing works consider perfect hardware and CSIT, which are highly unrealistic assumptions. Thus, the inconsistency between theory and reality grows and results in misleading conclusions. Hence, this work relies on the recognition that the 5G solutions should consider the RTHIs and user mobility. In addition, in order to avoid Monte-Carlo simulations, we have introduced tractable and accurate models for HetNets  in terms of the theory of PPPs to describe the randomness of BSs locations. Also, the RTHIs have been taken into account only in~\cite{PapazafComLetter2016} and~\cite{Papazafeiropoulos2017}. Specifically,~\cite{PapazafComLetter2016} studies RTHIs in the case of perfect CSIT, while~\cite{Papazafeiropoulos2017} investigates the system under the presence of RTHIs, pilot contamination, and channel aging, which is closer to our work but not at the full extend of technologies we propose.  Furthermore, although in~\cite{Bai2015}, the authors have taken into consideration the mmWave condition,  their scenario focuses only on unrealistic assumptions of perfect CSIT and hardware. As a result, noticing the marriage of many studies between HetNets and mmWave transmission, we enrich the general setting of HetNets with the special characteristics of high frequencies. In fact, we formulate a general practical MU-MIMO  with randomly-located BSs serving in the mmWave band, and impaired by the unavoidable RTHIs and imperfect CSIT due to channel aging. We focus on the determination of the potentials of HetNets enriched by the mmWave technology before their final implementation, in order to comply with the increasing need for conducting realistic characterization of 5G networks.

\subsection{Contributions}
The main contributions are summarised as follows.
\begin{itemize}
 \item We shed light on the impact of RTHIs and channel aging on the performance of the downlink coverage probability and the ASE of MU multiple-antenna BSs employing mmWave transmission in a HetNet design. In our investigations, we take into account for the residual additive distortions at both the transmitter and the receiver as well as amplified thermal noise (ATN) in a general realistic scenario, where only imperfect CSIT is available. For the sake of comparison, we also present the results corresponding to perfect hardware.
 \item Contrary to existing works~\cite{PapazafComLetter2016,Papazafeiropoulos2017}, which have studied the effect of RTHIs on the performance of heterogeneous cellular networks with perfect and imperfect CSIT, respectively, we focus on the mmWave band with its special characteristics. Moreover, with comparison to~\cite{Bai2015}, we introduce the RTHIs and channel aging, as well as we, investigate a more general setting which includes a multi-tier multi-user (MU)  mmWave setup with multiple BS antennas instead of single-tier single-input single-output (SISO) channels.
 \item We present several observations that proper system design should take into account. We show how the additive distortions and the amplified thermal noise degrade system performance. Furthermore, we quantify the degradation of the system due to time variation of the channel.
 \item The drawn characteristics confirm that  the higher the channel aging becomes, i.e., with increasing user mobility, the more severe the degradation of the systems is, especially for certain Doppler shifts. In  addition,
 the additive  distortion at the transmitter
 has a higher impact than the distortion at the receiver side, but, in general, the impact of both RATHIs becomes apparent at high signal-to-noise ratio (SNR). At the same time, the ATN contributes to the degradation of the system at low SNR, while at high SNR, ATN is negligible. It is also apparent that by increasing the directivity of the main lobe of the mmWave transmission,  ASE   increases resulting in a tradeoff. 
In fact, a trade-off between quality and cost should be chosen. 
\item We have also provided information on how the number of BS antennas and users can affect the drawn results. It is shown that it is better to employ more antennas at the BS, which agrees with the general idea of combining massive MIMO and mmWave transmission in a HetNet design. Also, it is a better design choice for every BS to serve as few users as possible.
  \end{itemize}
\subsection{Paper Outline}  
The remainder of this paper is organised as follows.  Section~\ref{System} develops the system model of a realistic multi-tier MU-multiple-input multiple-output (MIMO) mmWave HetNet with channel aging and RTHIs operating at mmWave frequencies.  In the same section,  channel aging is introduced and a description of the RATHIs is provided.  Next, Section~\ref{downlink} presents the downlink mmWave transmission under RATHIs and imperfect CSIT. Section~\ref{main} provides the main results of this work. Especially, Subsection~\ref{coverage} includes the derivation and investigation of the coverage probability, while Subsection~\ref{AverageAchievableRate1}, provides the presentation of the  ASE under the same realistic conditions.  The numerical results are placed in Section~\ref{Numerical}, and Section~\ref{Conclusion} concludes the paper.
\subsection{Notation}Vectors and matrices are denoted by boldface lower and upper case symbols. The symbols $(\cdot)^\T$, $(\cdot)^\H$, and $\tr\!\left( {\cdot} \right)$ express the transpose,  Hermitian  transpose, and trace operators, respectively. The expectation  operator is denoted by $\EE\left[\cdot\right]$, while the $\mathrm{diag}\{\cdot\}$ operator generates a diagonal matrix from a given vector, and the symbol $\triangleq$ declares definition. The notations $\mathcal{C}^{M \times 1}$ and $\mathcal{C}^{M\times N}$ refer to complex $M$-dimensional vectors and  $M\times N$ matrices, respectively. The indicator function $ \mathds{1}(e)$ is $1$ when event $e$ holds and $0$ otherwise. Moreover, $\mathrm{J}_{0}(\cdot)$ is the zeroth-order Bessel function of the first kind, and $\Gamma\left( x,y \right)$ denotes the Gamma distribution with shape and scale parameters $x$ and $y$, respectively. Furthermore, $\underset{x \in A}{\cup}$ denotes the union with $A$ being an index set. Also, $\mathcal{L}_{I}\!\left(s \right)$ expresses the Laplace transform of $I$. Finally, $\bb \sim \cC\cN{(\b0,\mathbf{\Sigma})}$ represents a circularly symmetric complex Gaussian vector with zero mean and covariance matrix $\mathbf{\Sigma}$.

\section{System Model}\label{System}
This section introduces the downlink model of a realistic HetNet embodying the principles of PPP modeling as well as MU-MIMO and mmWave transmission under the presence of imperfect CSIT with channel aging and inevitable hardware impairments.
\subsection{General Characteristics}
We consider  a set of $W$ different classes (tiers) of BSs with $\mathcal{W}=\{1, 2, \ldots,W\}$, where hundreds of femtocells coexist in each
macrocell with a multi-antenna BS and multiple single-antenna users per cell\footnote{In practice, such a setting means that the BSs of femtocells transmit with orders of magnitude lower power than macrocells, have a smaller number of antennas, and serve a smaller number of users.}. In addition, capturing the deployment trends in 5G networks (massive MIMO), each BS can employ a  number of antennas $N_{w}$, which can be quite large by approaching the regime of massive MIMO systems~\cite{Larsson2014}, which, in turn, is suggested by mmWave technology~\cite{Rappaport2013}. In such a case, many degrees of freedom are available to share per cell. The locations of the BSs of each tier are drawn from a general stationary point process $\Phi_{\mathrm{B}_{w}}$ with deployment density $\lambda_{\mathrm{B}_{w}}$. More compactly, we imply an MU-MIMO HetNet formulation, where $K_{w}\le N_{w}$ users, that are independently distributed with a comparison to the BSs and blockages on the plane, belong to the $w$th Voronoi cell. Across tiers, the BSs differ in terms of the transmit power $\rho_{w}$, the number of antennas $N_{w}$,  the number of users  $K_{w}$ served by each BS in a given resource block, and target signal-to-distortion-plus-interference-plus-noise ratios (SDINRs) $T_{w}$. In essence, each macro BS serves a higher load than its femto counterpart.  Similarly, the user locations in the $w$th tier are  modeled by means of a stationary independent PPP $\Phi_{\mathrm{K}_{w}}$ with  a sufficiently high density $\lambda_{\mathrm{K}_{w}}$ such that $K_{w}$ users are associated per BS\footnote{It is worthwhile to mention, that in reality, the various parameters also differ across the tiers. For example, such parameters are the number of antennas per BS, and the number of associated users. Hence, the BSs  in the $w$th tier include $N_{w}$ antennas and serve $K_{w}$ users. For the sake of simplicity, we indicate that the various parameters do not vary among the cells of tier $w$, i.e., the number of BS antennas in the $w$ tier is   $N_{w}~\forall l$, where $l$ denotes the cell number.}. In other words, we assume that the size of each cell is so large that it can accommodate $K_{w}$ users. 

Blockages such as buildings comprise a stationary
and isotropic (invariant to the motions of translation and rotation) process of random shapes~\cite[Ch. 10]{Baccelli2010}.  The BSs can be arranged inside or outside the blockages. Focusing on the outside BSs of the $w$th tier, we denote $\tilde{\Phi}_{w}=\{X_{wl}\}$ their point process, while $X_{wl}$ represents the $l$th outdoor BS at tier $w$, and $R_{wl}$ represents  the distance between the $l$th BS in the $w$th tier and the origin $0$.  The average fraction of the indoor area in the network in the $w$th tier, defined by $\gamma_{w}$ coincides with the average fraction of the land covered by blockages. On this ground, the probability each BS to be located outdoor is i.i.d. and given by $1-\gamma_{w}$. Taking into account the thinning theorem of PPP~\cite{Baccelli2010}, the density of the  outdoor BS process $\tilde{\Phi}_{w}$, being a PPP, is  $\lambda_{w}=\left( 1-\gamma_{w} \right)\lambda_{\mathrm{B}_{w}}$.

The focal point of this work is the downlink transmission, initiated at a BS located outdoor and ending at a single-antenna user located at the origin, which is found outdoor\footnote{We consider that the outdoor user  cannot receive any signal or interference from an indoor BS  because we assume that the indoor-to-outdoor penetration loss is high enough in the mmWave band. Furthermore,  the coverage of the indoor users can be achieved by
either indoor BSs or by outdoor BSs operating at ultra high frequencies (UHFs) since they have smaller indoor-to-outdoor penetration losses.  }.  Exploiting Slivnyak's theorem, we are able to conduct the analysis by focusing on a typical user, being  a user chosen at random from amongst all users in the network~\cite{Chiu2013a}. Without loss
of generality, we assume that the typical user is located at the origin. We assume that the serving  BS is located at $X_{0}$. The user with the smallest path-loss $L\left( R_{w}\right)$ associates with this BS.  We neglect the index $w$ since we refer to the current tier ($w$th tier). The coverage region of each outdoor BS defines the region with the maximum average received power, and the set of all cells constitutes a weighted Voronoi tessellation. 

\begin{table*}[!t]
\begin{center}\caption {Probability Mass Functions OF $G_{wl}$ \cite{Bai2015}} \label{tablaa}
\begin{tabular}{ |c|| c| c| c| c|}
 \hline
 $k$ & $1$ & $2$& $3$&$4$ \\
\hhline{|=#=|=|=|=|}
  $a_{wk}$ & $M_{w\mathrm{r}}M_{w\mathrm{t}}$ & $M_{w\mathrm{r}}m_{w\mathrm{t}}$&
     $m_{w\mathrm{r}}M_{w\mathrm{t}}$&$m_{w\mathrm{r}}m_{w\mathrm{t}}$ \\  
   \hline
$b_{wk}$ & $c_{wr}c_{wt}$ & $c_{wr}\left( 1-c_{wt} \right)$& $\left( 1-c_{wr} \right)c_{wt}$&$\left( 1-c_{wr} \right)\left( 1-c_{wt} \right)$\\
 \hline
\end{tabular}
\end{center}
\hrulefill
\end{table*}
Notably, we invoke that a BS is LOS to the typical user, found at the origin, when there is no blockage between them. Reasonably, in each tier, the blockages allow  an assortment of outdoor BSs to be LOS, while the rest BSs are NLOS. In other words, an outdoor BS can be discerned to NLOS and LOS to the typical user. We denote  $\Phi_{\mathrm{L}_{w}}$ the point process of LOS BSs, while $\Phi_{\mathrm{N}_{w}}=\Phi_{\mathrm{B}_{w}}\backslash \Phi_{\mathrm{L}_{w}}$ is the  process of NLOS BSs. $N_{w\mathrm{L}}$ and $N_{w\mathrm{N}}$ express the number of LOS and NLOS BSs. Moreover, the probability that a link of length $R_{wl}$ is LOS is called LOS probability function and is denoted by $p\left( R_{wl}\right)$. In particular,  the LOS probability function depends only on the length of the link $R_{wl}$ because the distribution of the blockage process has been assumed stationary and isotropic. Also, $p\left( R_{wl}\right)$ is a non-increasing function of $R_{wl}$ since the shorter the link, the more unlikely it will be intersected by one or more BSs. Obviously, the NLOS probability
of a link is $1-p\left( R_{wl}\right)$. The LOS probability function in a network can be obtained by means of stochastic blockage models~\cite{Bai2014} or field measurements~\cite{Akdeniz2014}, while the blockage parameters can be defined by some random distributions. For example, we have that $p\left( R_{wl}\right)=\mathrm{e}^{-\beta_{w} R_{wl}}$ with  $\beta_{w}$ being a parameter described by the
density and the average size of the blockages ($1/\beta_{w}$ is called the average LOS range of the network), if the blockages are modeled as a rectangle Boolean scheme~\cite{Bai2014}. Moreover, in this work, we ignore any correlations of blockage effects between the links, and, as a result, the  LOS probabilities  are assumed to be independent. Also, the LOS  and the NLOS BS processes are assumed independent with density functions $p\left( R_{wl}\right)\lambda_{w}$ and $\left( 1-p\left( R_{wl}\right) \right)\lambda_{w}$, respectively. In addition, the LOS and NLOS links obey to different path loss laws. Hence, the path-loss $L\left( R_{wl}\right)$, where $R_{wl}$ is the length of the link in polar coordinates, is obtained by
\begin{align}
 L\left( R_{wl}\right)&=\mathbb{I}\left( p\left( R_{wl}\right) \right)C_{\mathrm{L_{w}}}R_{wl}^{-\al_{\mathrm{L_{w}}}}\nn\\
 &+\left( 1-\mathbb{I}\left( p\left( R_{wl}\right) \right)\right)C_{\mathrm{N}_{w}}R_{wl}^{-\al_{\mathrm{N}_{w}} },
\end{align}
where $\mathbb{I}\left( x \right)$ denotes a Bernoulli random variable with parameter $x$, while $\al_{\mathrm{L_{w}}}$, $C_{\mathrm{L_{w}}}$ and $\al_{\mathrm{N}_{w}}$, $C_{\mathrm{N}_{w}}$ are the LOS and   NLOS path loss exponents,  intercepts of the LOS and NLOS BSs, respectively. Prior works such as~\cite{Rappaport2013} provide typical values for the mmWave path loss exponents and intercept constants.

 {Directional beamforming by means of antenna arrays deployed at the BSs is another assumption that could also hold for the mobile stations. However, for the sake of exposition, we assume single-antenna users as already stated. In order to make the analysis tractable, the  array patterns from the $l$th BS $G_{M_{w\mathrm{i}},m_{w\mathrm{i}},\theta_{wl\mathrm{i}}}\left( \phi_{wl\mathrm{i}} \right)$, where  $M_{w\mathrm{i}}$ is the main lobe directivity gain, $m_{w\mathrm{i}}$ is the back lobe gain, $\theta_{wl\mathrm{i}}$ is the beamwidth of the main lobe, and $\phi_{wl\mathrm{i}}$ is the angle of the boresight direction, are approximated by a sectored antenna model as in~\cite{Bai2015}. Note that the index $i$ takes two values. If it is $\mathrm{t}$, it describes the parameters of the BS, while if it is  $\mathrm{r}$, it  represents the variables of the user (mobile station). Setting the boresight direction of the antennas equal to $0^{\mathrm{o}}$, the total directivity gain in the
link between the $l$th BS and the typical user is  $G_{wl}=G_{M_{w\mathrm{t}},m_{w\mathrm{t}},\theta_{wl\mathrm{t}}}\left( \phi_{wl\mathrm{t}} \right)G_{M_{w\mathrm{r}},m_{w\mathrm{r}},\theta_{wl\mathrm{r}}}\left( \phi_{wl\mathrm{r}} \right)$ with $\phi_{wl\mathrm{t}}$ and $\phi_{wl\mathrm{r}}$ being the angle of departure and the angle of arrival  of the signal. The directivity gain in an interference link $G_{wl}$ is a discrete random variable with  probability distribution   $G_{wl}=\al_{wk}$  with probability $b_{wk}$, where $k=1,2,3,4$. The constants $\al_{wk}$ and $b_{wk}$ are given in Table~\ref{tablaa}, where $c_{w\mathrm{r}}=\frac{\theta_{w\mathrm{r}}}{2\pi}$
as $c_{wl\mathrm{t}}=\frac{\theta_{w\mathrm{t}}}{2\pi}$. The random directivity gain $G_{wl}$ for the $l$th interfering link results, if we assume that the angles  $\phi_{wl\mathrm{t}}$ and  $\phi_{wl\mathrm{r}}$
are assumed to be independently
and uniformly distributed in $(0, 2 \pi]$. Especially, in the case of the directivity
gain for the desired signal link, it is $G_{w} =M_{w\mathrm{r}}M_{w\mathrm{t}}$\footnote{ Given that the proposed model and analysis are quite flexible, their extensive performance under 3GPP practical models is an interesting topic for futute work due to limited space. As an example, we provide an 3GPP antenna gain pattern $G_{\mathrm{3GPP}}\left( \theta \right)$ defined by
\begin{align}
G_{\mathrm{3GPP}}\left( \theta \right)=\left\{\begin{array}{ll}g_{1}10^{-\frac{3}{10}\left( \frac{2 |\theta|}{|\omega|} \right)^{2}}&~\mathrm{if}~|\theta|\le \theta_{1}\\                                                                                                                                                                                                                                                                                                                                                                                                                                                                                                                                                                                                                                                                                                                                                                                                                                                                                                                                                                                                                                                             g_{2}&~\mathrm{if}~\theta_{1}<|\theta|\le \pi,   \\                                                                                                                                                                                                                                                                                                                                                                                                                                                                                                                                                                                                                                                                                                                                                                                                                                                                                                                                                                                                                                                            \end{array}
\right. \end{align}
where $\omega$ is the 3 dB-beamwidth, $\theta_{1} = \omega/2 \sqrt{10/3 \log_{10}(g_{1}/g_{2})}$, while  $g_{1}$ and $g_{2}$ are the max
 and side-lobe gains, respectively with $0 \le g_{2} < g_{1}$~\cite{UMTS}.}.}

\subsection{CSIT Model}
In practical systems, the CSIT available at the transmitter can be imperfect due to several reasons. In this work, we focus on the  lack of accuracy due to limited feedback and channel aging. Below, we describe these two sources, which are present in both small and large antenna regimes~\cite{Bjoernson2015,Papazafeiropoulos2015a,Papazafeiropoulos2017}, and are quite meaningful in mmWave systems. As a result, the proposed model is capable of describing any number of antennas. As far as the small-scale fading is concerned, we assume independent Rayleigh fading with different parameters for each link.
\subsubsection{Channel Estimation}
During the system design, a selection between time division duplex (TDD) and frequency division duplex (FDD) is made according to the requirements and the constraints~\cite{Bjoernson2015}. Although massive MIMO systems, suffering from pilot contamination, employ the former design~\cite{Rusek2013}, the implementation of an FDD solution due to existing infrastructure is viable and can be employed in both small and large number of antennas BS designs~\cite{Bjoernson2015}. Herein, for the sake of simplicity and without loss of any generality, we assume FDD, where the BS has available imperfect CSIT due to limited feedback, e.g., a quantized feedback with a fixed number of quantization bits~\cite{Wagner2012,Bjoernson2015}. Thus, the estimated channel at the associated BS of the $w$th tier is given by 
\begin{align}
 {\bh}_{w}=\sqrt{1-\tau_{w}^{2}}\hat{\bh}_{w}+\tau_{w}\tilde{\bh}_{w},
\end{align}
where $\tilde{\bh}_{w}$, being the estimation error,  has i.i.d. $\mathcal{CN}{\left( 0,1 \right)}$ entries independent of $\hat{\bh}_{w}$. Note that $\tau_{w}\in\left[ 0,1\right]$ is a parameter indicating the quality of instantaneous CSIT for the associated BS. For example, $\tau_{w}=0$  denotes perfect CSIT, whereas $\tau_{w}=1$ expresses that the estimated CSIT and perfect channel are completely uncorrelated. Note that both network nodes, i.e., the BS and the user, calculate their channels (angles of arrivals and fading) driven to profit the maximum directivity gain by adjusting their antenna steering orientations. 
\subsubsection{Channel Aging}
In common environments, relative mobility of the users with a comparison to the BS antennas takes place. Hence, the channel varies with time, and the result is a time-varying CSIT model~\cite{Papazafeiropoulos2015a}. Mathematically, we consider an autoregressive model of order $1$, where the current sample is related to its previous sample, that depends on the second-order statistics of the channel in terms of its autocorrelation function. Note that the autocorrelation function is generally a function of the velocity of the user, the propagation geometry, and the antenna characteristics. More concretely, we  ponder a Gauss-Markov model of low order $\left( 1 \right)$ for reasons of computational complexity and tractability~\cite{Truong2013,Papazafeiropoulos2015a}.  In such case, the current channel at the $w$th tier between the associated BS  and the typical user belonging to its cell tier is related to its previous sample as
\begin{align}
\bh_{w,n}  =& \delta_{w} \bh_{w,n-1} + \bee_{w,n},\label{eq:GaussMarkoModel}
\end{align}
where $\bh_{w,n-1}$ is the channel in the previous symbol duration and $\bee_{w,n} \in \bbC^{N_{w}}$ is an uncorrelated channel error due to the channel variation modeled as a stationary Gaussian random process with i.i.d.~entries and distribution $\cC\cN(\b0,(1-\delta_{w}^2)\Id_{N_{w}}$~\cite{Vu2007}. Regarding $\delta_{w}$, it is related to the autocorrelation function, and it is provided by the following line of reasoning.  Specifically, we engage the Jakes model for the autocorrelation function, which is widely accepted due to its generality and simplicity \cite{Baddour2005}. Note that the Jakes model describes a propagation medium with two-dimensional isotropic scattering and a monopole antenna at the receiver \cite{WCJr1974}.   Mathematically, the normalized discrete-time autocorrelation function of the fading channel in the $w$th tier is expressed by
\begin{align}
r_{w}[k] 
=& \mathrm{J}_{0} (2 \pi f_{D_{w}}T_{s_{w}}|k|),\label{eq:scalarACF}
\end{align}
where  $|k|$, $f_{D_{w}}$, and $T_{s_{w}}$ are the delay in terms of the number of symbols, the maximum Doppler shift, and the channel sampling period, respectively. Concerning the maximum Doppler shift $f_{D}$, it can be expressed  in terms of the relative velocity of the  $v$, i.e., $f_{D_{w}}=\frac{v_{w} f_{c}}{c}$, where $c=3\times10^{8}~\nicefrac{m}{s}$ is the speed of light and $f_{c}$ is the carrier frequency. For the sake of simplicity, we assume $k=1$, and that the associated BS has perfect knowledge of $\delta_{w}=r_{w}[1]$.

Both effects of limited CSIT and time-variation of the channel can be combined. Specifically, the fading channel at time slot $n$ can be expressed by
\begin{align}   
 \bh_{w,n}&=\delta_{w} {\bh}_{w,n-1}+\bee_{w,n}\nonumber\\
&=\delta_{w} \sqrt{1-\tau_{w}^{2}} \hat{\bh}_{w,n-1}+ \tilde{\bee}_{w,n},\label{eq:MMSEchannelEstimate}
\end{align}
where $\hat{\bh}_{w,n-1}$ and $\tilde{\bee}_{w,n}= \delta_{w} \tau_{w} \tilde{\bh}_{w,n-1}+\bee_{w,n}\sim \mathcal{CN}\left(\b0,\sigma_{w,\tilde{\bee}_{w}}^{2} {\mathrm{ \bI}}_{M} \right)$ with $\sigma_{w,\hat{\bee}_{w}}^{2}=\left(1-\delta_{w}^{2}\left( 1-\tau_{w}^{2} \right) \right)$
are mutually independent. In other words, the estimated channel at time $n$ is now $\hat{\bh}_{w,n}=\delta_{w}  \sqrt{1-\tau_{w}^{2}} \hat{\bh}_{w,n-1}$. Given that, especially in highly mobile scenarios, misalignments such as imperfect antenna steering and suboptimal directivity gain may emerge, we have left their study as a topic of future research.  Note that beem steering concerns the change of the direction of the main lobe of a radiation pattern. The current model assumes perfect antenna steering and maximum directivity gain,  where, especially, perfect antenna steering means that the main lobes between each transmitter and receiver pair are aligned.

\subsection{Hardware Impairments}
The transceiver of practical systems includes unavoidable hardware imperfections. Herein, we examine the impact of RATHIs and ATN. The study of multiplicative impairments such as the phase noise is left for future work.
\subsubsection{Emergence of RATHIs}\label{Presentation} 
Despite the mitigation schemes, implemented in both the transmitter and receiver,  RATHIs still emerge by means of residual additive distortion noises~\cite{Schenk2008,Studer2010}. Hence,  the transmitter side introduces an impairment causing a mismatch between the intended signal and what is actually transmitted during the transmit processing, while at the receiver side the received signal appears a distortion.  

Especially, the majority of HetNets literature, except~\cite{PapazafComLetter2016,Papazafeiropoulos2017}, relies on the assumption of perfect transceiver hardware with the hardware imperfections being ignored. In this direction, the gap between theory and practice increases. Interestingly, steps forward towards a more realistic approach necessitate the incorporation of RATHIs in the design. In fact, from conventional wireless systems and continuing to 5G networks such as massive MIMO systems, the inclusion of RATHIs in the analysis results in more down-to-earth conclusions ~\cite{Schenk2008,Studer2010,Goransson2008,Qi2012,Bjornson2015,Papazafeiropoulos2016,Bjoernson2013,Papazafeiropoulos2017,Papazafeiropoulos2017a}. 

In mathematical terms, it has been shown by means of measurement results that the conditional transmitter and receiver distortion noises for the $i$th link, given the channel realizations, are modeled as Gaussian distributed having average power proportional to the average signal power~\cite{Studer2010}. The justification behind the circularly-symmetric complex Gaussianity relies on the aggregate contribution of many impairments. Moreover, since the additive distortions take new realizations for each new data signal, they are time-dependent. 

The RATHIs at the transmitter (associated BS) and the receiver (typical user) in the $w$th tier are given by
\begin{align}
 \etv_{w\mathrm{t},n}^{\mathrm{BS}}&\sim \cC\cN\left( \b0,\bm \Lambda^{\mathrm{BS}}_{w,n} \right)\label{eta_t} \\
 \eta_{w\mathrm{r},n}^\mathrm{UE}&\sim \cC\cN \left( \b0,\Upsilon^{\mathrm{UE}}_{w,n} \right)\label{eta_r},
\end{align}
where $\bm \Lambda^{\mathrm{BS}}_{w,n}= \kappa_{\mathrm{t}_\mathrm{BS}w,n}^{2}\mathrm{diag}\left( q_{1,n},\ldots,q_{M,n} \right)$ and 
$ \Upsilon^{\mathrm{BS}}_{w,n} =\kappa_{\mathrm{r}_{\mathrm{UE}}w,n}^{2}G_{w}L_{w}\left( R_{w} \right) \bh_{k,n}^{\H}\tr\left( \bQ_{BS,n} \right)\bh_{k,n} $, with $\bQ_{BS,n}$ being the transmit covariance matrix at time instance $n$ of the associated BS with diagonal elements $q_{\mathrm{i}_1,n},\ldots,q_{T_\mathrm{i},n}$. Hence, we have $\bm \Lambda^{\mathrm{BS}}_{w,n}= \kappa_{\mathrm{t}_\mathrm{BS}w,n}^{2}  \rho_{w}^{\mathrm{UE}}/M$ and $ \Upsilon^{\mathrm{UE}}_{w,n} =\kappa_{\mathrm{r}_{\mathrm{UE}}w,n}\sqrt{L_{w}\left( R_{w} \right)}M  \rho_{w}^{\mathrm{UE}}\|\bh_{k,n}^{\H}\|^{2}$. The proportionality parameters $\kappa_{\mathrm{t}_{BS}w,n}^{2}$ and $\kappa_{\mathrm{r}_{UE}w,n}^{2}$, where, in applications  are met as the error vector magnitudes (EVM) at each transceiver side, describe the severity of the residual impairments at the BS and the user~\cite{Holma2011}.  In particular, the requirements, concerning the proportionality parameteres provided by the long  term  evolution  (LTE) standard, are in the range $\left[ 0.08,0.175\right] $~\cite{ChannelEstimation}. Notably, practical mmWave-enabled massive MIMO systems, encouraged to be constructed by cheap equipments, will be characterized by larger values of $\kappa_{\mathrm{t}_{\mathrm{BS}}}$ and $\kappa_{\mathrm{t}_{\mathrm{UE}}}$, which are taken into consideration in this paper  as can be seen in Section~V.
\begin{remark}
The receive distortion at the typical user incorporates  the path-loss coming from the associated LOS or NLOS BS. 
\end{remark}
\subsubsection{ATN}\label{ATN} 
This impairment is modeled by the variance $\xi^{2}_{n}$ of a Gaussian distributed random variable with zero mean. In fact, it is expressed by an amplification of the thermal noise,  appearing as an increase of its variance~\cite{Bjornson2015}. In other words, we have $ \sigma^{2}\le \xi_{n} $, where $\sigma^{2}$ is the variance of the actual thermal noise. From the physical point of view, this amplification emerges from the low noise amplifier, the mixers at the receiver  as well as other components that  engender a relevant amplified effect.

\section{Downlink Transmission under Imperfect CSIT and RTHIs}\label{downlink} 
The purpose of this section is to model the downlink transmission and obtain the corresponding SDINR and the probability densities functions (PDFs) of its terms, in order to derive the coverage probability and the ASE. Based on the proposed MU-MIMO HetNet, employing mmWave transmission, the received signal at the typical user in the $w$th tier from its associated LOS/NLOS BS at $R_{w}$  during the transmission phase $n$ can be written as
\begin{align}
 y_{w,n}&=\sqrt{G_{w}L_{w}\left( R_{w}\right)}\bh_{w,n}^{\H}\left( \bs_{w,n}+\etv^{\mathrm{BS}}_{w\mathrm{t},n}  \right)+\eta^{\mathrm{UE}}_{w\mathrm{r},n}\nn\\
 &+\!\!\!\sum_{j\in \mathcal{W}}\sum_{l:X_{jl}\in \Phi_{B_{j}}\backslash X_{0}}\!\!\!\!\sqrt{G_{jl}L_{j}\left( R_{jl} \right)}\bg_{jl,n}^{\H}\bs_{jl,n}+z_{w,n}\label{signal},
\end{align}
where  $\bs_{w,n} =\bV_{w,n}  \bd_{w,n} \in \mathbb{C}^{N_{w} \times 1}$ is the  transmit signal vector from the associated LOS/NLOS BS at the $w$th tier with covariance matrix $\bQ^{\mathrm{BS}}_{w,n}=\EE\left[ \bs_{w,n}\bs^{\H}_{w,n}\right] =P_{w}^{\mathrm{BS}}\Id_{N_{w}}= \rho_{w}^{\mathrm{BS}}/N_{w}\Id_{N_{w}}$ and   $\tr\left( \bQ_{\mathrm{BS},n} \right)= \rho_{w}^{\mathrm{BS}}$ is the average transmit power. Also, $z_{w,n}$ is the amplified thermal noise. The channel vector $\bh_{w,n} \in \mathbb{C}^{N_{w}\times 1}$  represents the desired  channel vector between the associated BS located at $R_{w} $ at  time-instance $n$ and the typical user. In a similar manner,  $\bg_{jl,n} \in \mathbb{C}^{N_{j}\times 1}$ denotes the interference channel vector from the BSs found at $R_{jl}$ far from the typical user  at time-instance $n$.  Notably, in the special case of Rayleigh fading, the PDFs of the powers of both the direct and the interfering links follow the Gamma distribution~\cite{Dhillon2013,Papazafeiropoulos2017}.

Since the system setting includes a MU-MIMO design, for the sake of exposition, we employ  zero-forcing (ZF)  precoding to support multi-stream transmission. We denote $\bV_{w,n}=[\bv_{w1,n},\ldots,\bv_{wK_{w},n}]\in \mathbb{C}^{N_{w}\times K_{w}}$ the precoding matrix of the associated BS,  which multiplies the data signal vector $\bd_{w,n} = \big[d_{w1,n}
,\dots,~d_{wK_{w},n }
\big]^\T \in \mathbb{C}^{K_{w}}\sim \mathcal{CN}(\b0,P_{w}^{\mathrm{BS}}\bI_{K_{w}})$ for all users in that cell.
Especially, taking account~\eqref{eq:MMSEchannelEstimate}, the ZF precoder, engaged by the associated BS of the typical user, can be written as
\begin{align}\label{precoder}
{\bV}_{w,n}&=\bar{\bH}_{w,n}\left( \bar{\bH}^{\H}_{w,n}\bar{\bH}_{w,n} \right)^{-1}\\
&=\delta_{w}^{-1} \bar{\bH}^{\dagger}_{w,n-1}=\delta_{w}^{-1} \hat{\bV}_{w,n-1},  \label{delayedPrec}                                                                                                                                                                                                                                                                                                                                                                            \end{align}
where $\bar{\bH}_{w,n}$ is the normalised version of $\hat{\bH}_{w,n}$ given by $\bar{\bH}_{w,n}=\left[ \bar{\bh}_{w1,n},\ldots, \bar{\bh}_{K_{w},n}\right]\in \mathbb{C}^{\left( N_{w} \times K_{w} \right)}$ with columns $\bar{\bh}_{wi,n}=\frac{\hat{\bh}_{wi,n}}{\|\hat{\bh}_{wi,n}\|}$. Note that the average transmit power per user of the associated BS is constrained to $\rho_{w}^{\mathrm{BS}}$ since the precoder is normalised, i.e., $\mathbb{E}\Big[\mathrm{tr}\left( \hat{\bV}_{w,n}\hat{\bV}^{\H}_{w,n} \right)\Big]=1$. In \eqref{delayedPrec},  we have introduced the user mobility effect for the $k$th user by means of $\hat{\bh}_{w,n}=\delta_{w} \hat{\bh}_{w,n-1}$.
\begin{remark}
Interestingly,~\eqref{delayedPrec} illustrates the user mobility effect on the ZF precoder in the downlink  transmission between the associated BS and the typical user.
\end{remark}

\begin{assumption}
Given the increased path-loss and that mmWave transmission takes place, we assume that the RATHIs from other BSs are negligible.
\end{assumption}
\begin{assumption}
Although in HetNet studies the thermal noise is omitted due to very low impact, hereafter, based on simulations, we include the presence of thermal noise since its effect is not negligible with a comparison to the additive distortion noises and the interference coming from the other cells. 
\end{assumption}

Taking into consideration the practical conditions of a realistic transmission by including the assumptions of limited CSIT and channel aging (time variation of the channel, i.e., see~\eqref{eq:MMSEchannelEstimate}),  the downlink received signal by the typical user is given by
\begin{align}
 y_{w,n}&= \sqrt{G_{w}L_{w}\left( R_{w} \right)}\hat{\bh}_{w,n-1}^{\H}\hat{\bv}_{w,n-1} d_{wk,n} 
 \nn\\
 &+\delta_{w}^{-1}\sqrt{G_{w}L_{w}\left( R_{w} \right)}\tilde{\bee}_{w,n}^{\H}\hat{\bV}_{w,n-1} \bd_{w,n}\nn\\
&+\sqrt{G_{w} L\left( R_{w} \right)} \bh_{w,n}^{\H}\etv^{\mathrm{BS}}_{w\mathrm{t},n}+\eta^{\mathrm{UE}}_{w\mathrm{r},n}+z_{w,n}
\nn\\
 &+\!\sum_{j\in \mathcal{W}}\sum_{l:X_{jl}\in \Phi_{B_{j}}\backslash X_{0}}\!\sqrt{G_{jl} L\left( R_{jl} \right)}{\bg}_{jl,n}^{\H}\hat{{\bV}}_{jl,n }\bd_{jl,n}, \label{filteredsignal}
\end{align}
where we have substituted~\eqref{eq:MMSEchannelEstimate}, in order to replace the current desired channel vector with its estimated version. In addition, we have expressed the current precoder by means of its delayed instance, known at the associated BS, by considering~\eqref{delayedPrec}.  

\begin{remark}
If we assume a single-tier network operating in UHF  and TDD design, we result in~\cite{Papazafeiropoulos2017}. In addition, if we assume perfect CSIT, we obtain the model in~\cite{PapazafComLetter2016}.
Setting the additive distortion parameters in~\eqref{filteredsignal} to zero, neglecting the amplified thermal noise, assuming perfect CSIT and single-tier mode the closest signal model corresponding to the ideal downlink model with the mmWave transmission, which does not account for RATHIs is~\cite{Bai2015 }. 
Similar properties/observations hold for any other expression including the downlink RATHIs and the channel aging. 
\end{remark}

Encoding the message over many realizations of all sources of randomness in the model, described by~\eqref{filteredsignal},  we obtain the SDINR. Note that this model consists of the imperfect CSIT noise, the accompanied channel estimate error, and RATHIs. In order to facilitate the statistical description of the SDINR, we denote $Z_{w,n}$ the desired channel power from the associated BS at time $n$ located at $L_{w}\left( R_{w} \right)$ to the typical user, found at the origin. Similarly, we denote $I_{jl,n}$ the power of the interfering link from other BSs located at $L_{\mathrm{z}}\left( R_{jl} \right)$.

\begin{proposition}\label{SINR}
The  SDINR of the downlink transmission from the associated LOS/NLOS  BS in $\Phi_{\mathrm{z}_{w}}$ ($z \in \{\mathrm{L},\mathrm{N}\}$) to  the typical user at $R_{w} $, taking into account for   RATHIs and imperfect CSI due to limited feedback, and time variation of the channel due to user mobility,  is given by
 \begin{align}
   \!\!\!  \mathrm{SDINR}_{\mathrm{z}}\left(q_{w} , x_{w} \right)
    \!=\!
        \frac{
            \beta_{w{\mathrm{z}}} Z_{w,n} 
            }{\left( E_{w{\mathrm{z}},n}\!+\!
            I_{ \etv_{w\mathrm{t}{\mathrm{z}},n}}
            \!+\!I_{ \etv_{w\mathrm{r}{\mathrm{z}},n}}\right) \!+\!
     I_{{\mathrm{z}},n} +\xi^{2}_{w,n}},\label{SDIR1}
\end{align}
where $ \beta_{wz}=M_{w\mathrm{r}}M_{w\mathrm{t}} P_{w}^{\mathrm{BS}}C_{\mathrm{Z_{w}}}R_{w}^{-\al_{\mathrm{Z_{w}}}}$, and the PDF of the desired signal power $Z_{w,n}\!=\!|\hat{\bh}_{w,n-1}^{\H}\hat{\bv}_{w,n-1}|^2$, following a scaled Gamma distribution, is given by
\begin{align}\label{eq PDF1 1}
    p_{Z_{w,n}}\left( z \right)=
       \frac{
            e^{-z/\sigma_{\hat{\bh}_{w}}^{2}}
            }{
            \left(N_{w}-K_{w}\right)!
         \sigma_{\hat{\bh}_{w}}^{2}
            }
        \left(
            \frac{
                z
                }{
               \sigma_{\hat{\bh}_{w}}^{2}
                }
        \right)^{N_{w}-K_{w}}, ~ z \geq 0    
        \end{align}
        while the other power terms are provided by\footnote{We assume that the manufacturing characteristics of the LOS and NLOS BSs are the same, i.e., the additive impairments  do not change.}
\begin{align}
 &E_{w\mathrm{z},n}\!=\!\beta_{w{\mathrm{z}}}\delta_{w}^{-2} \left(1+\kappa_{\mathrm{t}_{\mathrm{BS}}w,n} ^{2}\right)
               \big\|
                    \tilde{\bee}_{w,n}^{\H}\hat{\bV}_{w,n-1}]
              \big\|^2\label{error}\\
           &I_{ \etv_{w\mathrm{t}\mathrm{z},n}}\!=\! \beta_{w{\mathrm{z}}}\kappa_{\mathrm{t}_{\mathrm{BS}}w,n} ^{2}\|{{\bh}}_{w,n}\|^2\label{transmit}\\
 &I_{ \etv_{w\mathrm{r}\mathrm{z},n}}\!=\!\beta_{w{\mathrm{z}}}{\kappa_{\mathrm{r}_{\mathrm{UE}}w,n}^{2}}\|{\bh}_{w,n}\|^{2}\label{receive}\\
&I_{{\mathrm{z}},n}\!=\! \sum_{j\in \mathcal{W}}\sum_{l:X_{jl}\in \Phi_{B_{j}}\backslash X_{0}}\!\!\!\!\!\!\!\!{G_{jl} L\left( R_{jl} \right)}P_{w}^{\mathrm{BS}}\|{\bg}_{jl,n}^{\H}\hat{{\bV}}_{jl,n }\|^{2}\nn\\
           &\!=\! \sum_{j\in \mathcal{W}}\sum_{l:X_{jl}\in \Phi_{B_{j}}\backslash X_{0}}\!\!\!\!\!\!\!\!{G_{jl} L\left( R_{jl} \right)}P_{w}^{\mathrm{BS}} g_{jl,n}.
            \end{align}
            \end{proposition}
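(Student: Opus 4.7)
My plan is to read the SDINR~\eqref{SDIR1} off the received-signal decomposition~\eqref{filteredsignal} directly, by computing the conditional power of every term, and then to obtain the density~\eqref{eq PDF1 1} from standard properties of the ZF precoder applied to an i.i.d.\ Gaussian estimate. The useful signal is $\sqrt{G_wL_w(R_w)}\,\hat{\bh}_{w,n-1}^{\H}\hat{\bv}_{w,n-1}\,d_{wk,n}$, whose conditional second moment equals $G_wL_w(R_w)\,P_w^{\mathrm{BS}}Z_{w,n}$; after substituting $G_w=M_{w\mathrm{r}}M_{w\mathrm{t}}$, the LOS/NLOS path-loss $C_{\mathrm Z_w}R_w^{-\alpha_{\mathrm Z_w}}$, and $P_w^{\mathrm{BS}}=\rho_w^{\mathrm{BS}}/N_w$, this produces the numerator $\beta_{wz}Z_{w,n}$.

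Next, I would treat each disturbance in turn. The estimation/aging term $\delta_w^{-1}\sqrt{G_wL_w(R_w)}\,\tilde{\bee}_{w,n}^{\H}\hat{\bV}_{w,n-1}\bd_{w,n}$ has conditional power $\beta_{wz}\delta_w^{-2}\|\tilde{\bee}_{w,n}^{\H}\hat{\bV}_{w,n-1}\|^2$; collecting it together with the portion of $\etv_{w\mathrm t,n}^{\mathrm{BS}}$ that rides on the same aged subspace yields the $(1+\kappa_{\mathrm t_{\mathrm{BS}}w,n}^2)$ factor of~\eqref{error}. The transmit-distortion term $\sqrt{G_wL_w(R_w)}\,\bh_{w,n}^{\H}\etv_{w\mathrm t,n}^{\mathrm{BS}}$ becomes the quadratic form $\bh_{w,n}^{\H}\bm\Lambda_{w,n}^{\mathrm{BS}}\bh_{w,n}$, which collapses to $\beta_{wz}\kappa_{\mathrm t_{\mathrm{BS}}w,n}^2\|\bh_{w,n}\|^2$ via $\bm\Lambda_{w,n}^{\mathrm{BS}}=\kappa_{\mathrm t_{\mathrm{BS}}w,n}^2 P_w^{\mathrm{BS}}\bI$, recovering~\eqref{transmit}. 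The receive-distortion variance $\Upsilon_{w,n}^{\mathrm{UE}}\propto G_wL_w(R_w)\|\bh_{w,n}\|^2$ yields~\eqref{receive}, while the inter-cell interference and the ATN variance $\xi_{w,n}^2$ can be read off directly after pulling the common scaling into $g_{jl,n}\eqdef\|\bg_{jl,n}^{\H}\hat{\bV}_{jl,n}\|^2$.

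For the distribution of $Z_{w,n}=|\hat{\bh}_{w,n-1}^{\H}\hat{\bv}_{w,n-1}|^2$, I would invoke the well-known ZF beamforming result: since $\hat{\bh}_{w,n-1}$ has i.i.d.\ $\cC\cN(0,\sigma_{\hat{\bh}_w}^2)$ entries and $\hat{\bv}_{w,n-1}$ is the unit-norm ZF direction obtained from the pseudoinverse of $\bar{\bH}_{w,n-1}$ (whose remaining columns are independent of $\hat{\bh}_{w,n-1}$), standard arguments for quadratic forms in independent Gaussian vectors (see, e.g.,~\cite{Dhillon2013}) give $Z_{w,n}/\sigma_{\hat{\bh}_w}^2 \sim \mathrm{Gamma}(N_w-K_w+1,1)$, which rewrites as~\eqref{eq PDF1 1}.

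The principal obstacle I anticipate is bookkeeping the coupling between the delayed precoder $\hat{\bV}_{w,n-1}$ and the current aged channel $\bh_{w,n}=\delta_w\sqrt{1-\tau_w^2}\,\hat{\bh}_{w,n-1}+\tilde{\bee}_{w,n}$: one must verify that~\eqref{delayedPrec} absorbs the factor $\delta_w^{-1}$ so that the useful-signal projection collapses cleanly to $Z_{w,n}$, while the orthogonal aging remainder is attributed entirely to~\eqref{error} without leaving spurious cross terms between $\tilde{\bee}_{w,n}$ and $\hat{\bh}_{w,n-1}$ that would contaminate either~\eqref{transmit} or~\eqref{receive}.
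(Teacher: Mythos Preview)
Your proposal is correct and follows essentially the same route as the paper: read each power term off the decomposition~\eqref{filteredsignal}, average over the additive distortions~\eqref{eta_t}--\eqref{eta_r} to obtain~\eqref{transmit}--\eqref{receive}, and then identify the law of $Z_{w,n}$. The only methodological difference is in this last step: the paper factors $Z_{w,n}=|\bar{\bh}_{w,n-1}^{\H}\bv_{w,n-1}|^{2}\cdot\|\hat{\bh}_{w,n-1}\|^{2}$ as a product of an independent $B(N_{w}-K_{w}+1,K_{w}-1)$ and a $\Gamma(N_{w},\sigma_{\hat{\bh}_{w}}^{2})$ variable (citing~\cite{Jindal2006}) to conclude $Z_{w,n}\sim\Gamma(\Delta_{w},\sigma_{\hat{\bh}_{w}}^{2})$, whereas you invoke the end result directly. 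The paper's appendix also goes slightly beyond the proposition statement by recording the Gamma laws of $E_{w\mathrm{z},n}$, $I_{\etv_{w\mathrm{t}\mathrm{z},n}}$, $I_{\etv_{w\mathrm{r}\mathrm{z},n}}$, and $g_{jl,n}$, which feed into Lemma~\ref{LaplaceTransformGamma}; your plan does not mention these, but they are not part of the proposition as stated.
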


            As can be seen in~\eqref{SDIR1}, the $\mathrm{SDINR}$ is a function of the position $x_{w}$ and $q_{w}$ defining a set of parameters. Specifically, we define $q_{w}\triangleq\{K_{w},N_{w},\lambda_{\mathrm{B}_{w}},\al_{w},\delta_{w},\tau_{w}, \kappa_{\mathrm{t}_\mathrm{BS}w},$ $ \kappa_{\mathrm{r}_\mathrm{UE}w},$ $\xi_{w}, M_{w\mathrm{t}},M_{w\mathrm{r}},m_{w\mathrm{t}},m_{w\mathrm{t}}\}$. 
\begin{proof}
See Appendix~\ref{SINRproof}.
\end{proof}
\begin{remark}
Each term of the denominator of~\eqref{SDIR1} describes different effects. Indeed, the terms from left to right indicate the estimation error, the transmit additive distortion noise, the receive distortion noise,  the inter-cell interference coming from other BSs belonging in different cells and tiers, and the last term expresses the ATN. Notably, the estimation error depends on both channel aging and additive transmit impairment. Note that the  term in the numerator expresses the desired signal contribution in the typical current cell.
 \end{remark}
\begin{remark}
The ideal mmWave model with no hardware impairments and channel aging is obtained if $\tau_{w}=\delta_{w}=1$,  $\kappa_{\mathrm{t}_{\mathrm{BS}}w,n}=\kappa_{\mathrm{r}_{\mathrm{UE}}w,n}=0$, and $\xi_{n} = \sigma^{2}$  $\forall w,~n$.
\end{remark}
\section{Main Results}\label{main}
This section presents the main results of this work  in terms of theorems, describing the coverage probability and the ASE of the typical user. Henceforth, we omit the time index $n$ for the sake of simplicity.

\subsection{Coverage Probability}\label{coverage} 
The investigation of the coverage probability, being one of the main cores of this work, is the topic of this section. Specifically, we derive an upper bound of the downlink coverage probability of the typical user in a multiple antenna HetNet operating at mmWave frequencies under the practical conditions of imperfect CSIT, channel aging,  and RATHIs. For this reason, it is important to provide first a formal definition of the coverage probability in the case of randomly located BSs. Next, we continue with the main result by means of a theorem, derived in Appendix~\ref{CoverageProbabilityproof}. Notably, despite the abstraction of the definition, we result in the most general expression known in the literature approaching a more realistic appraisal of a network with randomly located BSs operating at mmWave frequencies.

\begin{definition}[\!\!\cite{Dhillon2013,Papazafeiropoulos2017}]\label{def1}
A typical user  is  in  coverage if its effective  downlink SDINR from at least one of the randomly located BSs in the network is higher
  than the corresponding target $T_{w}$. In general, we have
 \begin{align}
 p_{c}\left(   q_{w}, T_{w}  \right) \triangleq\mathbb{P}\left(\underset{{w} \in \mathcal{W}}{ \bigcup} \max_{x_{w}\in \Phi_{\mathrm{B}_{w}}}\mathrm{SDINR\left( q_{w},x_{w} \right)>T_{w}}   \right).
\end{align}
 \end{definition}
 
Having provided the definition of the coverage probability, we present its expression by means of the following theorem.

\begin{theorem}\label{theoremCoverageProbability} 
The upper bound of the downlink coverage probability   $p_{c}\left(   T_{w},q_{w}  \right)$ of a general realistic MU-MIMO   HetNet  with randomly distributed multiple antenna BSs transmitting  in the mmWave region, accounting for RATHIs, imperfect CSIT, and channel aging is given by\footnote{$p_{c,\mathrm{L}}$ and $p_{c,\mathrm{N}}$ are obtained by means of Definition 1.}
\begin{align}
  {p_{c} \left(   q_{w}, T_{w}  \right) =
p_{c,\mathrm{L}}\left(   q_{w}, T_{w}  \right) +p_{c,\mathrm{N}}\left(   q_{w}, T_{w}  \right),}
\end{align}
where  $p_{c,\mathrm{L}}$ and $p_{c,\mathrm{N}}$ are the conditional coverage
probabilities given that the user is associated with a LOS in $\Phi_{\mathrm{L}_{w}}$ or a NLOS BS   in  $\Phi_{\mathrm{N}_{w}}$.

Specifically, 
$p_{c,\mathrm{z}}\left(   q_{w}, T_{w}  \right) $ with $z \in \{\mathrm{L},\mathrm{N}\}$  is given by
\begin{align}
 &p_{c,\mathrm{z}}\left(   q_{w}, T_{w}  \right)  =\!\underset{w \in \mathcal{W}}{\sum}A_{\mathrm{z}_{w}}\sum_{i=0}^{\Delta_{w}-1}\!\sum_{u=0}^{i}\!\sum_{u_1+u_2+u_3+u_4=i-u}e^{-s_{w\mathrm{z}}\xi^{2}}\!\!\!\nn\\
 &\times\int_{0}^{\infty}\!\!\!\binom{i}{u}\!\!
 \binom{i\!-\!u}{u_1+u_2+u_3+u_4}\frac{\left( -1 \right)^{i}\!s_{w\mathrm{z}}^{u}\left(  s_{w\mathrm{u}}\xi^{2} \right)^{u_{4}}  }{i!}\!\!\!\! \nn\\
&\times\frac{\mathrm{d}^{u_1}}{\mathrm{d}s_{w\mathrm{z}}^{u_1}}\mathcal{L}_{E_{w\mathrm{z}} }\!\left( s_{w\mathrm{z}} \right)\frac{\mathrm{d}^{u_2}}{\mathrm{d}s_{w\mathrm{z}}^{u_2}}\mathcal{L}_{I_{\etv_{t}w\mathrm{z}}}\!\!\left(s_{w\mathrm{z}} \right)\frac{\mathrm{d}^{u_3}}{\mathrm{d}s_{w\mathrm{z}}^{u_3}}\mathcal{L}_{I_{\etv_{r}w\mathrm{z}}}\!\!\left(s_{w\mathrm{z}} \right)\!\nn\\
&\times\frac{\mathrm{d}^{u}}{\mathrm{d}s_{w\mathrm{z}}^{u}} \mathcal{L}_{I_{\mathrm{z}}}\!\!\left(s_{w\mathrm{z}} \right)\hat{f}_{\mathrm{z}}\!\left( x_{w} \right)\mathrm{d}x_{w},
\end{align}
where $A_{\mathrm{z}_{w}}$   is defined in Lemma~\ref{lemma3}, and $s_{w\mathrm{z}}=\frac{{T}_{w}\beta_{w\mathrm{z}}^{-1}}{\sigma_{\hat{\bh}_{w}}^{2}}$.
$\mathcal{L}_{E_{w\mathrm{z}}}\!\left(s \right)$,  $\mathcal{L}_{I_{ \etv_{w\mathrm{t}z}}}\!\left(s \right)$,   $\mathcal{L}_{I_{ \etv_{w\mathrm{r}z}}}\!\left(s \right)$, and $\mathcal{L}_{I_{wl}}\!\left(s \right)$ are the Laplace transforms of the powers of the estimation error, the transmit distortion, the receive distortion,  and the interference  coming from other BSs across all the tiers.
\end{theorem}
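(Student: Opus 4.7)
The plan is to unfold Definition~\ref{def1} in the standard stochastic geometry way, decompose the event per tier and per LOS/NLOS type, and then convert the CCDF of the desired signal power (which is Erlang by Proposition~\ref{SINR}) into an expectation that factorises over Laplace transforms of the denominator terms of the $\mathrm{SDINR}$. Concretely, I would first apply the union bound across the $W$ tiers to upper bound $p_c(q_w,T_w)$ by $\sum_{w\in\mathcal{W}} \mathbb{P}(\max_{x_w\in\Phi_{\mathrm{B}_w}}\mathrm{SDINR}(q_w,x_w)>T_w)$. This step is a strict upper bound in general, but for the $T_w\geq 1$ regime it is known to be tight because at most one BS in each tier can supply an SDINR exceeding unity. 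Next, by the law of total probability, I would split each tier's term into the LOS and NLOS events, yielding $p_c = \sum_w (p_{c,\mathrm{L}} + p_{c,\mathrm{N}})$, where the conditional factors are obtained by integrating against the density $\hat{f}_{\mathrm{z}}(x_w)$ of the distance to the associated type-$\mathrm{z}$ BS (this density carries the association probability $A_{\mathrm{z}_w}$, which is invoked from Lemma~\ref{lemma3}).

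Inside each conditional integrand I would rewrite $\{\mathrm{SDINR}_\mathrm{z}>T_w\}$ as $\{Z_{w,n}>s_{w\mathrm{z}}\sigma_{\hat{\bh}_w}^{2}\,Y\}$ with $Y\triangleq E_{w\mathrm{z}}+I_{\etv_{w\mathrm{t}\mathrm{z}}}+I_{\etv_{w\mathrm{r}\mathrm{z}}}+I_{\mathrm{z}}+\xi^{2}_{w}$ and $s_{w\mathrm{z}}$ exactly as in the theorem. Because $Z_{w,n}/\sigma_{\hat{\bh}_w}^{2}$ is Erlang with shape $\Delta_w = N_w-K_w+1$ by \eqref{eq PDF1 1}, its CCDF admits the finite-sum form
\begin{align}
\mathbb{P}\!\left(Z_{w,n}>s_{w\mathrm{z}}\sigma_{\hat{\bh}_w}^{2}Y\;\big|\;Y\right)
= e^{-s_{w\mathrm{z}} Y}\sum_{i=0}^{\Delta_w-1}\frac{(s_{w\mathrm{z}} Y)^{i}}{i!}.
\end{align}
Taking expectation over $Y$ converts this into a sum of mixed moments of $Y$ weighted by $e^{-s_{w\mathrm{z}} Y}$. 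I would then apply the multinomial theorem to $Y^{i}$, partitioning the total degree $i$ as $u+u_1+u_2+u_3+u_4$, with $u$ assigned to $I_{\mathrm{z}}$ and $u_1,\dots,u_4$ assigned respectively to $E_{w\mathrm{z}}$, $I_{\etv_{w\mathrm{t}\mathrm{z}}}$, $I_{\etv_{w\mathrm{r}\mathrm{z}}}$, and the deterministic $\xi^{2}$.

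The next step is the standard identity $\mathbb{E}[X^{k}e^{-sX}] = (-1)^{k}\frac{\mathrm d^{k}}{\mathrm d s^{k}}\mathcal{L}_{X}(s)$; assuming mutual independence of the five components of $Y$ (which is the usual stochastic-geometry working assumption, justifiable here since the small-scale fades on distinct links and tiers are independent, the hardware noises are independent of the channels, and $\xi^{2}$ is constant), the joint expectation factorises into a product of derivatives of the individual Laplace transforms. Pulling out $e^{-s_{w\mathrm{z}}\xi^{2}}$ from the deterministic term and collecting the $(-1)^{u+u_1+u_2+u_3+u_4}=(-1)^{i}$ signs yields precisely the integrand displayed in the theorem. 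Multiplying by $\hat f_{\mathrm z}(x_w)$, integrating over $x_w$, and weighting by the association probability $A_{\mathrm{z}_w}$ closes the chain.

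The main obstacle I anticipate is the dependence bookkeeping: the error term $E_{w\mathrm{z}}$ in \eqref{error}, the transmit/receive distortions in \eqref{transmit}--\eqref{receive}, and the aggregate interference $I_{\mathrm{z}}$ share the common geometry of the serving-link distance and implicitly the serving-BS channel, so before invoking the multinomial--Laplace decoupling I must condition on $R_w$ and argue that, conditionally, the small-scale channel components driving $E_{w\mathrm{z}}$, $\|\bh_{w,n}\|^{2}$, and the interfering channels $\bg_{jl,n}$ are mutually independent. Once this conditional independence is in place, the remainder is a routine bookkeeping exercise; the nontrivial evaluation of $\mathcal{L}_{E_{w\mathrm{z}}}$, $\mathcal{L}_{I_{\etv_{w\mathrm{t}\mathrm{z}}}}$, $\mathcal{L}_{I_{\etv_{w\mathrm{r}\mathrm{z}}}}$, and $\mathcal{L}_{I_{\mathrm{z}}}$, together with the explicit forms of $\hat f_{\mathrm z}$ and $A_{\mathrm{z}_w}$, I would defer to subsequent lemmas (in particular the Lemma~\ref{lemma3} already alluded to) so that the proof of Theorem~\ref{theoremCoverageProbability} itself remains a clean assembly of these pieces.
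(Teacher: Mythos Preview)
Your proposal is correct and follows essentially the same route as the paper: union bound over tiers and BSs, LOS--NLOS split via total probability, the Erlang CCDF of $Z_{w,n}$, a binomial--multinomial expansion of the denominator power, and the identity $\mathbb{E}[X^{k}e^{-sX}]=(-1)^{k}\tfrac{\mathrm d^{k}}{\mathrm d s^{k}}\mathcal{L}_X(s)$ to factorise into the displayed product of Laplace-transform derivatives. The only cosmetic differences are that the paper name-drops the Campbell--Mecke theorem when passing to the integral against $\hat f_{\mathrm z}(x_w)$ and carries out the expansion in two stages (binomial to peel off $I_{\mathrm z}$ with index $u$, then multinomial over $u_1,\dots,u_4$), which is precisely the $\binom{i}{u}\binom{i-u}{u_1,u_2,u_3,u_4}$ structure in the statement.
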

\begin{proof}
See Appendix~\ref{CoverageProbabilityproof}.
\end{proof}

The calculation of the coverage probability is based on  the Laplace transforms obtained  in terms of  Lemma~\ref{LaplaceTransformGamma} and  Proposition~\ref{LaplaceTransform} as provided below. 
\begin{lemma}\label{LaplaceTransformGamma} 
 The Laplace transforms of the random variables given by~\eqref{error},~\eqref{transmit}, and~\eqref{receive}, expressing the estimation error,  $I_{ \etv_{w\mathrm{t}\mathrm{z}}}$, and $I_{ \etv_{w\mathrm{r}\mathrm{z}}}$, respectively, are given by
 \begin{align}
\mathcal{L}_{E_{w\mathrm{\mathrm{z}}}}\!\left(s_{w\mathrm{z}} \right)&=\frac{1}{\left( 1+\delta_{w}^{-2} \left(1+\kappa_{\mathrm{t}_{\mathrm{BS}w}} ^{2}\right)\sigma_{\hat{\bee}_{k}}^{2} \zeta_{w\mathrm{z}}s_{w\mathrm{z}} \right)^{K_{w}}}\\
\mathcal{L}_{I_{\etv_{t}w\mathrm{z}}}\!\left(s_{w\mathrm{z}} \right)&=\frac{1}{\left( 1+\kappa_{\mathrm{t}_{\mathrm{BS}w}}^{2}\zeta_{w\mathrm{z}} s_{w\mathrm{z}} \right)^{N_{w}}}, \\
\mathcal{L}_{I_{\etv_{r}w\mathrm{z}}}\!\left(s_{w\mathrm{z}} \right)&=\frac{1}{\left( 1+\kappa_{\mathrm{r}_{\mathrm{UE}w}}^{2} \zeta_{w\mathrm{z}}s_{w\mathrm{z}} \right)^{N_{w}}},
 \end{align}
 where $\zeta_{w\mathrm{z}}=\frac{\sigma_{\hat{\bh}_{w}}^{2}}{{T}_{w}\beta_{w\mathrm{z}}^{-1}}$.
\end{lemma}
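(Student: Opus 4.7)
The plan is to recognise each of the three powers as a scaled Gamma random variable and then read off the Laplace transform from the standard identity $\mathbb{E}[e^{-sY}]=(1+s\theta)^{-k}$ for $Y\sim\Gamma(k,\theta)$. The three cases differ only in which isotropic complex-Gaussian vector generates the quadratic form and in the deterministic prefactor multiplying it, so the heart of the argument is one distributional computation reused three times.

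I would start with the two additive-distortion terms $I_{\etv_{w\mathrm{t}\mathrm{z}}}$ and $I_{\etv_{w\mathrm{r}\mathrm{z}}}$, which are structurally identical: each equals a deterministic prefactor ($\beta_{w\mathrm{z}}\kappa_{\mathrm{t}_{\mathrm{BS}}w}^{2}$ or $\beta_{w\mathrm{z}}\kappa_{\mathrm{r}_{\mathrm{UE}}w}^{2}$) times $\|\bh_{w,n}\|^{2}$. Since $\bh_{w,n}\in\mathbb{C}^{N_{w}}$ has i.i.d.\ $\mathcal{CN}(0,\sigma_{\hat{\bh}_{w}}^{2})$ entries, $\|\bh_{w,n}\|^{2}\sim\Gamma(N_{w},\sigma_{\hat{\bh}_{w}}^{2})$ exactly, so the Gamma Laplace transform gives $(1+s\,\beta_{w\mathrm{z}}\kappa_{\cdot}^{2}\sigma_{\hat{\bh}_{w}}^{2})^{-N_{w}}$. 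Substituting the definition $\zeta_{w\mathrm{z}}=\sigma_{\hat{\bh}_{w}}^{2}\beta_{w\mathrm{z}}/T_{w}$ and absorbing the remaining $T_{w}$ factor into the rescaled Laplace variable $s_{w\mathrm{z}}$ produces the two stated expressions immediately.

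The estimation-error power $E_{w\mathrm{z}}\propto\|\tilde{\bee}_{w,n}^{\H}\hat{\bV}_{w,n-1}\|^{2}$ is where I expect the real work. Independence of $\tilde{\bee}_{w,n}=\delta_{w}\tau_{w}\tilde{\bh}_{w,n-1}+\bee_{w,n}$ from $\hat{\bV}_{w,n-1}$ is the first thing to establish, and it follows because the precoder depends only on $\hat{\bh}_{w,n-1}$, which is independent of both the MMSE quantisation error $\tilde{\bh}_{w,n-1}$ and the Gauss--Markov innovation $\bee_{w,n}$. Conditional on $\hat{\bV}_{w,n-1}$, the row $\tilde{\bee}_{w,n}^{\H}\hat{\bV}_{w,n-1}$ is therefore zero-mean complex Gaussian with covariance $\sigma_{\hat{\bee}_{w}}^{2}\hat{\bV}^{\H}\hat{\bV}=\sigma_{\hat{\bee}_{w}}^{2}(\bar{\bH}^{\H}\bar{\bH})^{-1}$, and the quadratic form $\tilde{\bee}^{\H}\hat{\bV}\hat{\bV}^{\H}\tilde{\bee}$ is \textit{a priori} a weighted sum of $K_{w}$ exponentials whose weights are the eigenvalues of $(\bar{\bH}^{\H}\bar{\bH})^{-1}$. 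The delicate step --- and the main obstacle --- is collapsing this into a clean $\Gamma(K_{w},\sigma_{\hat{\bee}_{w}}^{2})$: I would invoke the ZF normalisation $\mathbb{E}[\tr(\hat{\bV}\hat{\bV}^{\H})]=1$ so that each effective per-user beam can be treated as carrying equal, unit-normalised power, whereupon the weighted sum collapses to $K_{w}$ i.i.d.\ exponentials of common parameter $\sigma_{\hat{\bee}_{w}}^{2}$. Pulling the deterministic prefactor $\beta_{w\mathrm{z}}\delta_{w}^{-2}(1+\kappa_{\mathrm{t}_{\mathrm{BS}}w}^{2})$ through the Laplace transform and combining it with $\zeta_{w\mathrm{z}}$ exactly as in the first two cases yields the third identity; everything else is bookkeeping.
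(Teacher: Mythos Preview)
Your proposal is correct and follows the same route as the paper: each of the three powers is identified as a scaled Gamma variable (shape $N_w$ for the two distortion terms, shape $K_w$ for the estimation-error term) and the Laplace transform $(1+s\theta)^{-k}$ is read off directly. The paper's own proof is in fact even terser than yours---it simply points back to the Gamma characterisations established in the proof of Proposition~\ref{SINR} and says the Laplace transforms ``can be easily obtained''; the only difference is that where you invoke the trace normalisation $\mathbb{E}[\tr(\hat{\bV}\hat{\bV}^{\H})]=1$ to collapse the weighted exponential sum, the paper instead asserts that each precoder column $\hat{\bv}_{wi,n-1}$ has unit norm, which is the same approximation phrased slightly differently.
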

\begin{proof}
In Appendix~\ref{SINRproof}, it is mentioned that $I_{ \etv_{w\mathrm{t}\mathrm{z}}}$ and $I_{ \etv_{w\mathrm{r}\mathrm{z}}}$ are  scaled Gamma distributions.
Hence, their Laplace transforms can be easily obtained. In a similar way, the Laplace transforms of the estimation error is also derived. 
\end{proof}
 \begin{proposition}\label{LaplaceTransform} 
Given that the typical user is associated with a LOS BS, the Laplace transform of the interference power in a  general realistic cellular network with randomly distributed multiple antenna BSs, operating at mmWave frequencies, having RATHIs and imperfect CSIT is given by
\begin{align}
&\mathcal{L}_{I_{\mathrm{L}}}\left(  s_{w } \right)=\prod_{j \in \mathcal{W}}\prod_{k=1}^{4}e^{\left( -{2\pi\lambda_{w}}b_{wk}\left( V_{jk}\left(x\right)+W_{jk}\left(x\right) \right)\right)},
\end{align}
where  $V_{jk}\left( x \right)=\int_{x}^{\infty} \left( 1-\frac{1}{\left( 1+\frac{\bar{f}_{wk}C_{\mathrm{L}_{j}}s_{w }}{K_{j}}\left( \frac{y}{t} \right)^{-\al_{\mathrm{L}}} \right)^{K_{j}}}  \right)p\left( t \right)\mathrm{d}t$,
$W_{jk}\left( x \right)=\int_{\psi_{j}(x_{j})}^{\infty}\! \left(\!\! 1\!-\!\frac{1}{\left( 1+\frac{\bar{f}_{wk}C_{\mathrm{N}_{j}}s_{w }}{K_{j}}\left( \frac{y}{t} \right)^{-\al_{\mathrm{N}}} \right)^{K_{j}}}  \!\!\right)\!\!\left( 1\!-\!p\left( t \right) \right)\mathrm{d}t$, $\bar{f}_{wk}=\frac{a_{wk}}{M_{rw}M_{tw}}$, $p\left( t \right)$ is the LOS PDF as well as $a_{wk}$ and $b_{wk}$ are constants defined in Table~I.
\end{proposition}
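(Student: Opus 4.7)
The plan is to start from the definition
\begin{align}
\mathcal{L}_{I_{\mathrm{L}}}(s_w)=\EE\Bigl[\exp\Bigl(-s_w\!\!\sum_{j\in\mathcal{W}}\sum_{l:X_{jl}\in\Phi_{B_j}\setminus X_0}\!\!\!G_{jl}L_j(R_{jl})P_w^{\mathrm{BS}}g_{jl}\Bigr)\Bigr],
\end{align}
and exploit the fact that three mutually independent sources of randomness live inside it: the PPPs $\Phi_{B_j}$ (equivalently their LOS/NLOS thinnings $\Phi_{L_j},\Phi_{N_j}$ with intensities $p(t)\lambda_j$ and $(1-p(t))\lambda_j$), the discrete directivity gains $\{G_{jl}\}$, and the effective interference channel powers $\{g_{jl}\}$. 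Since the per-tier processes are independent across $j\in\mathcal{W}$, the exponential factorises into a product over tiers, so I would reduce the problem to a single tier and rebuild the outer product at the end.

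Within tier $j$, I would split the sum into LOS and NLOS contributions. Conditioning on the typical user being served by a LOS BS at distance $x$, the cell-selection rule (largest average received power) eliminates every LOS interferer of tier $j$ closer than $x$, and every NLOS interferer of tier $j$ closer than $\psi_j(x_j)$, where $\psi_j$ is fixed implicitly by equating the LOS path loss at $x$ with the NLOS path loss at $\psi_j(x_j)$. These exclusion radii are exactly the lower integration limits that appear in $V_{jk}(x)$ and $W_{jk}(x)$.

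Next I would remove the small-scale fading. Under ZF with Rayleigh fading, $g_{jl}=\|\bg_{jl}^{\H}\hat{\bV}_{jl}\|^2$ is Gamma-distributed with shape $K_j$ (after the appropriate power normalisation), so averaging over $g_{jl}$ turns each per-BS factor into the kernel $\bigl(1+\tfrac{G_{jl}L_j(R_{jl})s_w}{K_j}\bigr)^{-K_j}$, in the spirit of Lemma~\ref{LaplaceTransformGamma}. Then I would average over the discrete directivity gain using $\P(G_{jl}=a_{wk})=b_{wk}$, $k=1,\dots,4$, yielding a convex combination $\sum_{k=1}^{4}b_{wk}\bigl(1+\tfrac{a_{wk}L_j(R_{jl})s_w}{K_j}\bigr)^{-K_j}$ for each interferer. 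Applying the PGFL of a PPP, $\EE\!\bigl[\prod_{x\in\Phi}f(x)\bigr]=\exp\!\bigl(-\!\int(1-f(x))\Lambda(\mathrm{d}x)\bigr)$, separately to the LOS and NLOS thinned processes, and converting to polar coordinates (which produces the $2\pi\lambda_w$ prefactor), I would use the identity $\sum_k b_{wk}=1$ to pull the sum inside the exponent, i.e.\ $\exp(-\!\int(1-\sum_k b_{wk}f_k))=\prod_k\exp(-b_{wk}\!\int(1-f_k))$, which converts the sum over $k$ into the outer product $\prod_{k=1}^{4}$ and isolates exactly the integrals $V_{jk}(x)$ and $W_{jk}(x)$.

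The step that I expect to demand the most care is the bookkeeping of the association-induced exclusion regions: making sure that the LOS serving BS at $x$ is strictly nearer (in path-loss sense) than every other LOS/NLOS BS across all tiers, so that $x$ and $\psi_j(x_j)$ are applied to the correct point processes, and checking that no tier-$j$ serving candidate is accidentally re-counted as an interferer. The remaining manipulations---Gamma Laplace transform, 4-point averaging, PGFL, and the $\sum_k b_{wk}=1$ factorisation---are essentially mechanical once this conditioning is pinned down, and they deliver the stated product form.
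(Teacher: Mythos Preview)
Your proposal is correct and follows essentially the same route as the paper: factorise over tiers by independence, split each tier into its LOS and NLOS thinned processes with the association-induced exclusion radii $x$ and $\psi_j(x_j)$, average out the Gamma-$K_j$ fading and the four-point directivity gain, and then apply the PGFL in polar coordinates. Your explicit use of $\sum_k b_{wk}=1$ to turn the sum over antenna-pattern states into the product $\prod_{k=1}^{4}$ is exactly what the paper does implicitly when passing from the $\sum_{k=1}^{4}b_k$ inside the exponent to the outer product form.
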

\begin{proof}
See Appendix~\ref{LaplaceTransformproof}.
\end{proof}

When a typical user is associated with an NLOS BS, the interference $\mathcal{L}_{I_{\mathrm{N}}}\left(  s_{w } \right)$ is obtained by means of a similar proposition.

These results are more general  than~\cite{PapazafComLetter2016,Papazafeiropoulos2017} for several reasons. For example, one reason,  regarding the former reference, is the realistic consideration of imperfect CSIT, while, with a comparison to the latter reference, we consider a mmWave transmission with its special characteristics.

\subsection{ASE }\label{AverageAchievableRate1}
Herein, we present the other main result of this paper,  which is unique in the research area of practical mmWave systems with hardware impairments, when the BSs are randomly positioned. Specifically, we refer to the ASE for a typical user served with mmWave transmission, while both the CSIT and the transceiver hardware are imperfect due to channel aging and RATHIs, respectively. The presentation is concise to avoid any repetition since the analysis and some definitions are similar to Section~\ref{coverage}.

We start with the definition of the ASE for a multi-tier setup. Specifically, we have~\cite{Dhillon2013}
\begin{align}
 \eta\left(   q_{w}, T_{w}  \right)=\underset{w \in \mathcal{W}}{\sum}K_{w}\lambda_{w}\log_{2}\left( 1+T_{w} \right)P_{\mathrm{c}}^{w}\left(   q_{w}, T_{w}  \right),\label{rate1}
\end{align}
where $P_{\mathrm{c}}^{w}$ is the  coverage probability conditional on the serving BS found in the $w$th tier given Theorem 1.
Similar to~\cite{Dhillon2013}, we do not derive the per tier coverage probability. However, we can assume that the ASE is  $P_{\mathrm{c}}^{w}=P_{\mathrm{c}}~\forall w$\footnote{The theorem holds for the cases of SISO transmission when the target SIRs are the same for all tiers and for SDMA transmission under the same assumption plus when the number of  BS antennas are the same for all tiers.}. 
\begin{theorem}\label{AverageAchievableRate}
 The downlink area spectral  efficiency of a realistic  MU-MIMO HetNet operating in the mmWave  frequencies in the presence of the inevitable RATHIs and channel aging, is
\begin{align}
 \eta\left(   q_{w}, T_{w}  \right)=P_{\mathrm{c}}\left(   q_{w}, T_{w}  \right)\underset{w \in \mathcal{W}}{\sum}K_{w}\lambda_{w}\log_{2}\left( 1+T_{w} \right),\label{rate1}
 \end{align}
 where $P_{\mathrm{c}}\left(   q_{w}, T_{w}  \right)$ is given by Theorem $1$. The various parameters, concerning this result, are also defined in Subsection~\ref{coverage} by means of Theorem~\ref{theoremCoverageProbability}.
\end{theorem}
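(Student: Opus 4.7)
The plan is straightforward: treat Theorem~\ref{AverageAchievableRate} as a direct corollary of Theorem~\ref{theoremCoverageProbability} and the multi-tier ASE definition already stated in~\eqref{rate1}. Specifically, I would start from
\begin{align}
\eta(q_w,T_w)=\underset{w\in\mathcal{W}}{\sum}K_w\lambda_w\log_2(1+T_w)\,P_{\mathrm{c}}^{w}(q_w,T_w),\nonumber
\end{align}
which follows the usual counting argument of~\cite{Dhillon2013}: each tier $w$ contributes $K_w$ active links per BS, the BS density is $\lambda_w$, a successful link at target SDINR $T_w$ delivers $\log_2(1+T_w)$ bits/s/Hz, and a link is successful with probability $P_{\mathrm{c}}^{w}$ conditional on the serving BS lying in tier $w$. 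This is already taken as given by the paper.

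Next, I would invoke the homogenization assumption flagged in the footnote: under the stated regime (SISO with common target SDINR across tiers, or SDMA with common target SDINR and common $N_w$), the per-tier conditional coverage probability equals the unconditional one, $P_{\mathrm{c}}^{w}(q_w,T_w)=P_{\mathrm{c}}(q_w,T_w)$ for all $w\in\mathcal{W}$. This would have to be justified by noting that the SDINR distribution conditional on associating with a tier-$w$ BS, evaluated at the tier-$w$ target, produces a quantity that is invariant across tiers precisely when the scaling/degree parameters entering the Gamma distribution of $Z_{w,n}$ in~\eqref{eq PDF1 1} and the Laplace transform structure of Proposition~\ref{LaplaceTransform} match across tiers. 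Once that is in place, factor $P_{\mathrm{c}}(q_w,T_w)$ out of the sum, giving the claimed
\begin{align}
\eta(q_w,T_w)=P_{\mathrm{c}}(q_w,T_w)\underset{w\in\mathcal{W}}{\sum}K_w\lambda_w\log_2(1+T_w),\nonumber
\end{align}
and finish by substituting the explicit upper bound for $P_{\mathrm{c}}$ supplied by Theorem~\ref{theoremCoverageProbability} (together with Lemma~\ref{LaplaceTransformGamma} and Proposition~\ref{LaplaceTransform} for the Laplace transforms of the RATHI, estimation-error, and interference contributions).

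The only genuine obstacle is conceptual rather than computational: the identification $P_{\mathrm{c}}^{w}=P_{\mathrm{c}}$ is an assumption rather than a theorem, and it is strictly valid only in the symmetric parameterizations mentioned in the footnote. Consequently the result is really a corollary in disguise, inheriting the upper-bound character from Theorem~\ref{theoremCoverageProbability} and the symmetry hypothesis from the preceding discussion; no further stochastic-geometry calculation is needed beyond what has already been done for the coverage probability.
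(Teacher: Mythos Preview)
Your proposal is correct and mirrors the paper's own treatment: the paper does not supply a separate proof for Theorem~\ref{AverageAchievableRate} but simply states it as the ASE definition~\eqref{rate1} with the per-tier coverage probabilities replaced by the common $P_{\mathrm{c}}$ under the footnoted symmetry assumption and then evaluated via Theorem~\ref{theoremCoverageProbability}. Your observation that the result is effectively a corollary inheriting both the upper-bound character and the $P_{\mathrm{c}}^{w}=P_{\mathrm{c}}$ hypothesis is exactly how the paper handles it.
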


 \section{Numerical Results}  \label{Numerical} 
This section presents illustrations of the analytical expressions, which are also verified by Monte Carlo simulations. We examine the impact of various design parameters such as the channel aging, RATHIs, numbers of users and BS antennas on the general theoretical expressions describing the coverage probability and the ASE provided by Theorems~\ref{theoremCoverageProbability} and~\ref{AverageAchievableRate}, respectively.
We choose a sufficiently large area of $5~\mathrm{km}\times 5$ $\mathrm{km}$, where the locations of different classes of BSs are simulated as realizations of different PPPs with given densities $\lambda_{\mathrm{B}_{w}}$. The users' PPP densities in all tiers are considered to be  $\lambda_{\mathrm{K}_{w}}=6\lambda_{\mathrm{B}_{w}}$. For the ease of exposition, we focus on a two-tier setup, where $\lambda_{\mathrm{B}_{2}}=0.5\lambda_{\mathrm{B}_{1}}$. Note that the simulation takes place over a finite window, while the analytical expressions rely on the assumption of an infinite plane, which results in border effects, being visible at high SNR. The setup under consideration includes BSs of $M=5$ number of antennas serving $K=2$ users.  The LOS and NLOS path-loss exponents are set to $\al_{\mathrm{L}_{w}}=3$ and $\al_{\mathrm{N}_{w}}=4$, respectively. The average downlink transmit power is $\rho_{w}^{\mathrm{BS}}=5~\mathrm{dBW}$. The network operates at $50$ GHz, while the bandwidth allocated for each user is $100$ MHz. The parameter $\beta$, included in the LOS probability function is given by $1/\beta=141.4$ meters. Moreover, the transmit antenna pattern is assumed to be $G_{20~\mathrm{dB},~0~dB,~30\degree}$~\cite{Bai2015}.

Remarkably, even in the special case of perfect CSIT, but still, time-varying, there is no known result in the literature studying RATHIs in the mmWave area. Moreover, if we assume a static channel, i.e., $\delta_{w}=1$, again, there is no known reference corresponding to this model since RATHIs and ATN are present. However, especially, when there is no channel aging,  all the impairments are set to zero and we consider only one tier, i.e., $w=1$, Theorem 1 coincides with \cite[Theorem~1]{Bai2015}, but there is no study regarding the ASE.


The figures depicting the proposed analytical expressions of the coverage probability $p_{c}\left(   q_{w}, T_{w}  \right)$ and the achievable user rate $\eta\left(   q_{w}, T_{w}  \right)$ are plotted  along with the corresponding simulated results. The ``solid''  lines with certain patterns illustrate the proposed analytical results with specific quality of imperfect CSIT, RATHIs, and user mobility, while the ``dot'' lines, in most of the figures, correspond to the ``ideal'' expressions with no transceiver impairments and no relative user movement. Similarly, the ``stars'' represent the simulation results. Apparently, the unavoidable effects under study, i.e., the RATHIs and the time variation of the channel degrade the system performance since the corresponding terms are met in the denominator of the SDINR. In fact, by increasing the severity of these effects, the performance worsens. Below,  for the sake of exposition and and without any impact on the following extracted conclusions, we assume that the transceivers of both tiers employ the same hardware. As a result, the parameters  defining the quality of the hardware are equal in both tiers. Also, for the sake of  simplification, we consider that, in general, all the other parameters in the two tiers are the same, e.g., $\delta_{1}=\delta_{2}=\delta$.
\subsection{Impact of Channel Aging}
In order to focus only on the impact of channel aging, we set all the additive impairments equal to zero and the ATN equal to $\sigma^{2}$ i.e., $\kappa_{\mathrm{t}_\mathrm{UE}w}=\kappa_{\mathrm{r}_\mathrm{UE}w}=0$ and $\xi=\sigma^{2}$. Hence, in Fig.~\ref{Fig1}, we depict the variation of the coverage probability for varying values of $\delta$.  Obviously, increasing $\delta$, being equivalent to lower mobility, the coverage probability increases. If the normalized variable  $f_{D}T_{s}$ becomes high enough, which means high mobility, then $\delta$ becomes very small, and the coverage probability is  inadequate to support any service. In other words, by focusing on the normalized variable $f_{D}T_{s}$, we observe that its increase degrades  $p_{c} $. Interestingly, when $f_{D}T_{s} \to 0.4$ and given this channel aging model, $f_{D}T_{s} \to 0$, i.e., there is no user mobility and the coverage probability decreases to minimum. Further increase of $f_{D}T_{s}$ results in improvement of the coverage probability. Indirectly, this behavior is explained by noticing that $\delta$ follows the variation of the Jakes autocorrelation function with $f_{D}T_{s}$ as an argument.
\begin{figure}[!h]
 \begin{center}
 \includegraphics[width=0.8\linewidth]{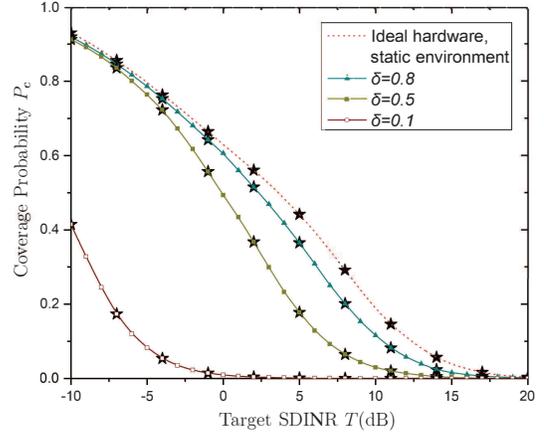}
 \caption{\footnotesize{Coverage probability of a MU- MIMO mmWave HetNet versus the target SDINR $T$ for varying  severity of channel aging, while the RATHIs and ATN are assumed to have no impact ($\kappa_{\mathrm{t}_{\mathrm{BS}}}=\kappa_{\mathrm{r}_{\mathrm{UE}}}=0$,  $\xi=\sigma^{2}$).}}
 \label{Fig1}
 \end{center}
 \end{figure}
  \begin{figure}[!h]
 \begin{center}
 \includegraphics[width=0.8\linewidth]{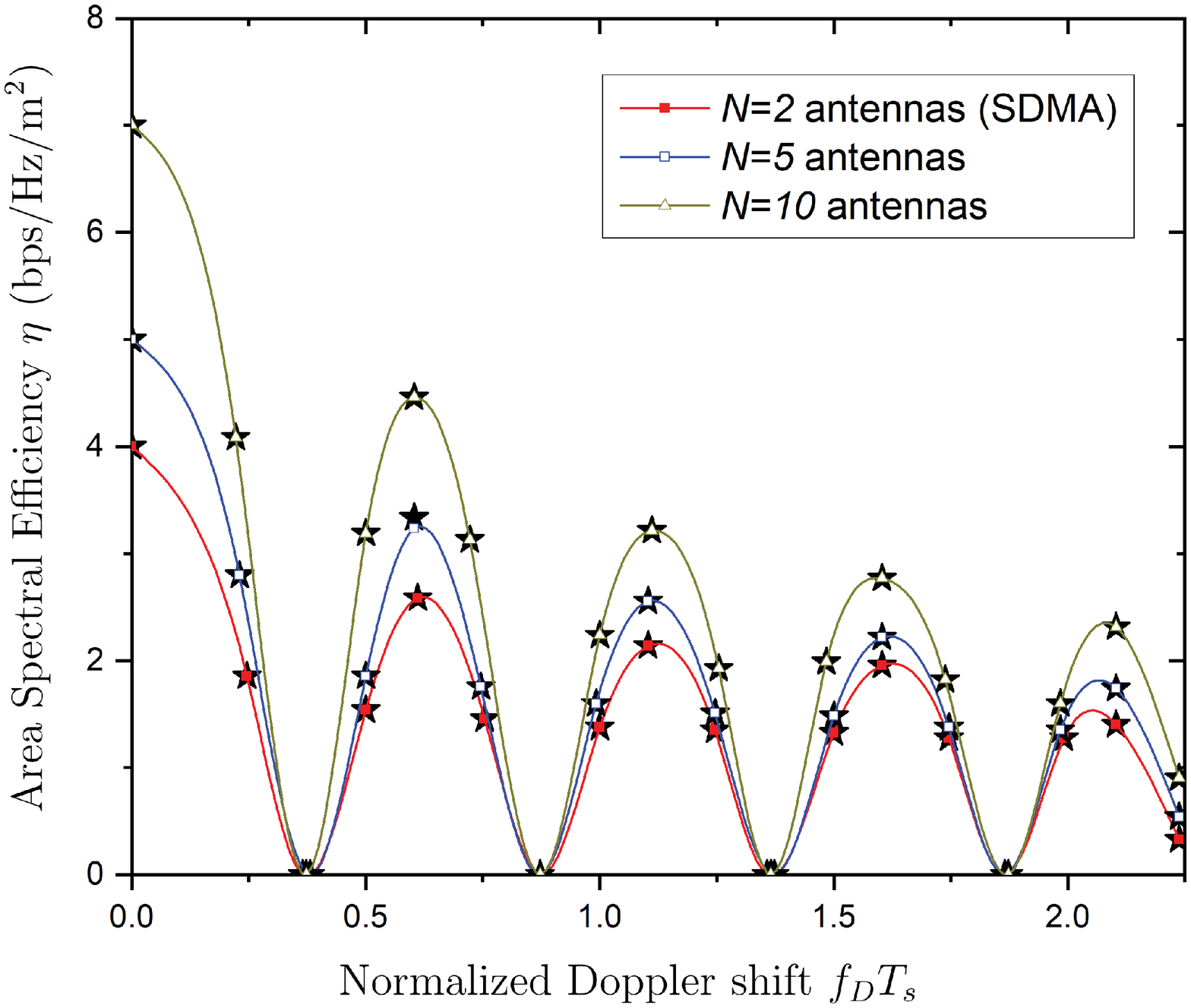}
 \caption{\footnotesize{Area spectral efficiency of a MU-MIMO mmWave HetNet versus the target normalized Doppler shift $f_{D}T_{s}$ for varying number of BS antennas, while the RATHIs and ATN are assumed to have no impact ($\kappa_{\mathrm{t}_{\mathrm{BS}}}=\kappa_{\mathrm{r}_{\mathrm{UE}}}=0$,  $\xi=\sigma^{2}$).}}
 \label{Fig2}
 \end{center}
 \end{figure}
 
Regarding Fig.~\ref{Fig2}, it depicts the ASE versus the normalized Doppler shift $f_{\mathrm{D}}T_{\mathrm{s}}$ for varying number of BS antennas. Especially, in the case $N=2$, then we have space division multiple access (SDMA) since we have considered $K=2$ number of users. It is obvious that the channel aging (lower quality of CSIT) decreases the downlink ASE to zero several times. In fact, the curves have some ripples with their behavior following the shape of the $\mathrm{J}_{0}(\cdot)$, i.e., the zeroth-order Bessel function of the first kind. Hence, we note that the ASE increases up to a point and then decreases to zero, again and again, tending finally to zero. The basic shape of the plots does not change with the number of BS antennas. Hence, the zero points appear at specific values of the normalized Doppler shift. This is justified because $\delta=r[1]$, depending only on $f_{\mathrm{D}}T_{\mathrm{s}}$. Only the magnitude increases with increasing $N$ since more degrees of freedom are provided.
 
\subsection{Impact of RATHIs}
In Fig.~\ref{Fig3}, we plot the coverage probability as a function of the target SDINR for varying values of the RATHIs, but with no channel aging and no ATN. These nominal values of RATHIs are quite reasonable since these values can be found in practical systems. Specifically, according to~\cite{Bjornson2015,Papazafeiropoulos2017}, if we assume that we have an Analog-to-Digital Converter (ADC) quantizing the received signal to a $b$ bit resolution, then $\kappa_{\mathrm{t}_{\mathrm{BS}}}=\kappa_{\mathrm{r}_{\mathrm{UE}}}=2^{-b}/\sqrt{1-2^{-2b}}$.  This expression for 2, 3, and 4 bits gives $\kappa_{\mathrm{t}_{\mathrm{BS}}}=0.258,~0.126,$ and $0.062$, respectively. The chosen number of bits concerns the trend to employ low-precision ADCs in 5G networks~\cite{Mo2017,Roth2017}. Clearly, the number of bits defines the impact on the coverage probability. In fact, the smaller the resolution, the higher the degradation. Moreover, as the transmit power increases, the degradation from the RATHIs becomes more severe because these are power-dependent. Actually, as we characterize conventional systems with interference as interference-limited at high SNR, in practice, they are also RATHIs-limited. In other words, the impact of RATHIs becomes more severe at high SNR. Fortunately, next-generation systems such as massive MIMO are supposed to work with very low transmit power. However, during the system design, special attention has to be taken, in order not to exceed the specifications of the system.

Fig.~\ref{Fig4} illustrates the variation of the ASE with the SDINR for varying RATHIs. Clearly, an increase of the additive hardware impairments brings a decrease to the ASE as expected. Moreover, at the same figure, we show the ASE for ideal hardware and no channel aging. The great gap between the curves appears at high SDINR since at this regime, the RATHIs start having an impact (RATHIs are power dependent). For the same reason, the variation of RATHIs becomes more distinguishable. Furthermore, we demonstrate a comparison between the impacts transmit and receive additive impairments. Notably, the transmit impairments result in a higher loss. This is reasonable since the additive impairments affect also the term including the estimation error as can be seen by~\eqref{error}.

\begin{figure}[!h]
 \begin{center}
 \includegraphics[width=0.8\linewidth]{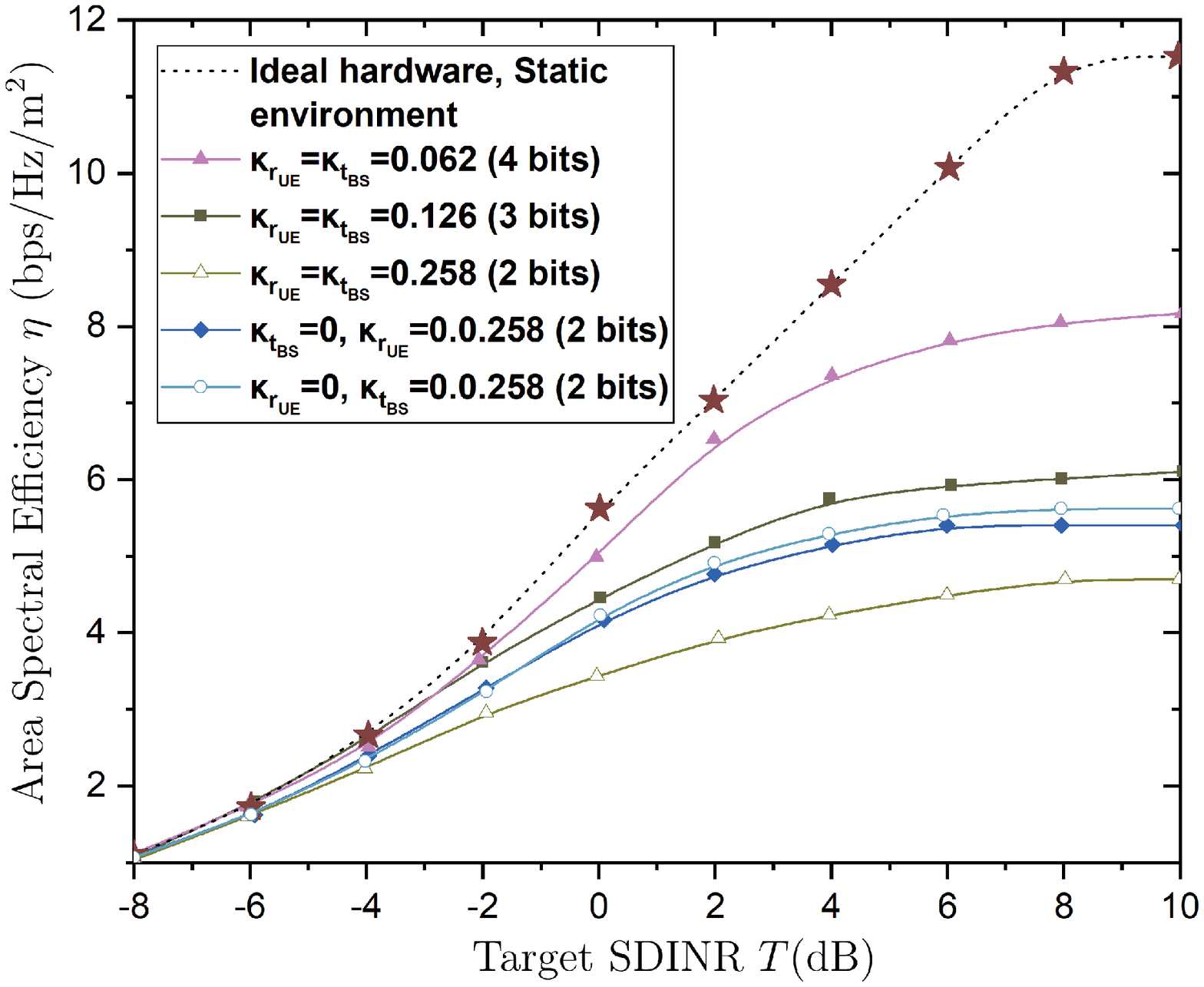}
 \caption{\footnotesize{Coverage probability of a MU-MIMO mmWave HetNet versus the target SDINR $T$ for varying  severity of the RATHIs, while the channel aging and ATN are assumed to have no impact ($\delta=1$,  $\xi=\sigma^{2}$).}}
 \label{Fig3}
 \end{center}
 \end{figure}
 \begin{figure}[!h]
 \begin{center}
 \includegraphics[width=0.8\linewidth]{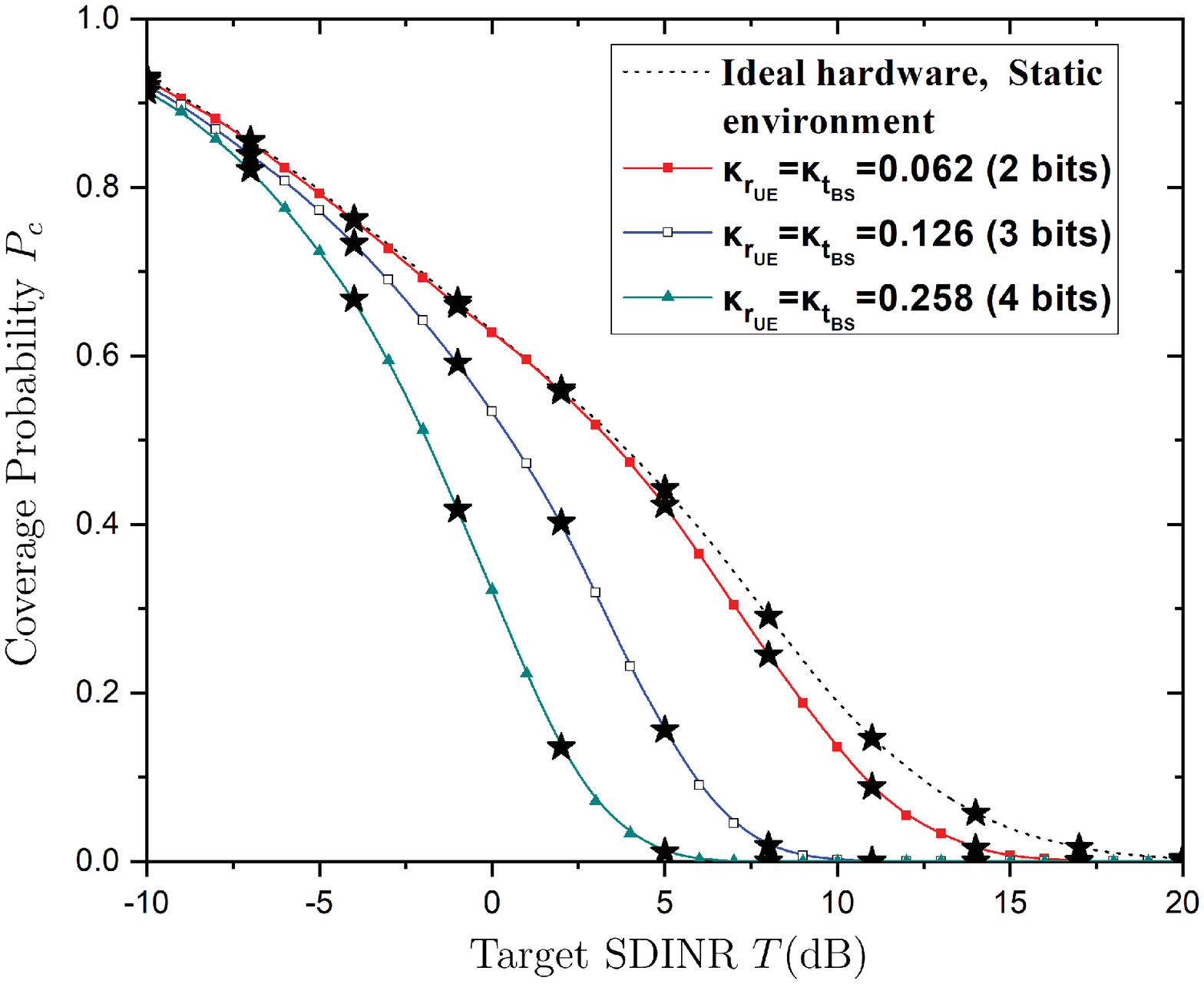}
 \caption{\footnotesize{Area spectral efficiency of a MU-MIMO mmWave HetNet versus the target SDINR $T$ for varying  severity of the RATHIs, while the channel aging and ATN are assumed to have no impact ($\delta=1$,  $\xi=\sigma^{2}$).}}
 \label{Fig4}
 \end{center}
 \end{figure}
\subsection{Impact of ATN}
Fig.~\ref{Fig5} illustrates the impact of ATN on the coverage probability. The selected values are multiples of $\xi=1.6\sigma^{2}$, which has been borrowed from \cite{Bjornson2015}. Specifically, in \cite{Bjornson2015}, the authors assumed a low  noise amplifier with $F$ being the noise amplification factor.   Hence, if we assume that $F=2$ dB and $b=3$ bits, then $\xi=\frac{F \sigma^{2}}{1-2^{-2b}}=1.6\sigma^{2}$. As can be seen, ATN affects the coverage probability, but less than the other factors under consideration in this work. Notably, at high SNR, the system becomes power-limited, and thus, the ATN does not make the coverage probability change. As a result, 
the lines coincide at high SNR.
\begin{figure}[!h]
 \begin{center}
 \includegraphics[width=0.8\linewidth]{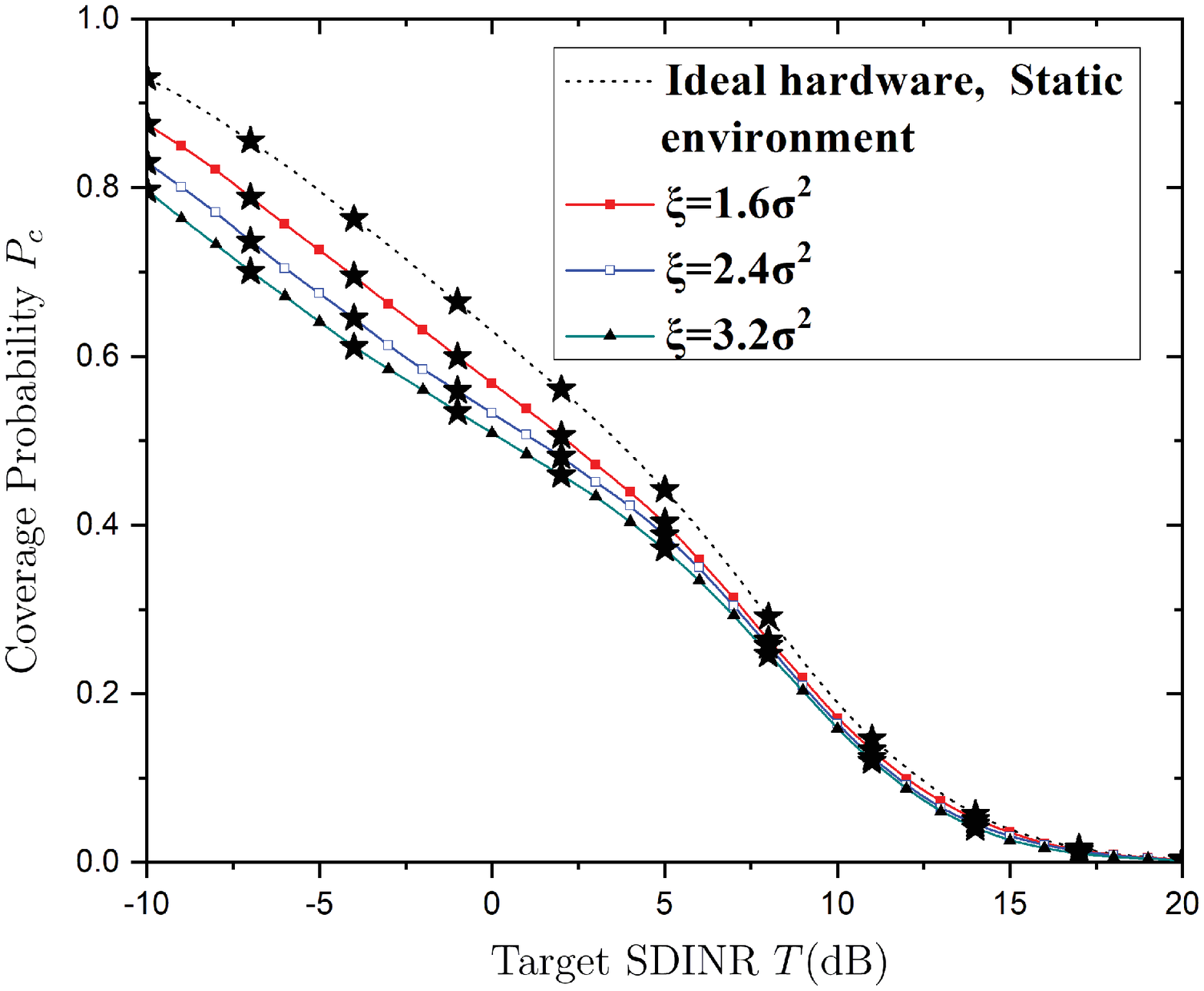}
 \caption{\footnotesize{Coverage probability of a MU-MIMO mmWave HetNet versus the target SDINR $T$ for varying quality of the ATN, while the RATHIs and channel aging are assumed to have no impact ($\kappa_{\mathrm{t}_{\mathrm{BS}}}=\kappa_{\mathrm{r}_{\mathrm{UE}}}=0$,  $\delta=1$).}}
 \label{Fig5}
 \end{center}
 \end{figure}
  \begin{figure}[!h]
 \begin{center}
 \includegraphics[width=0.8\linewidth]{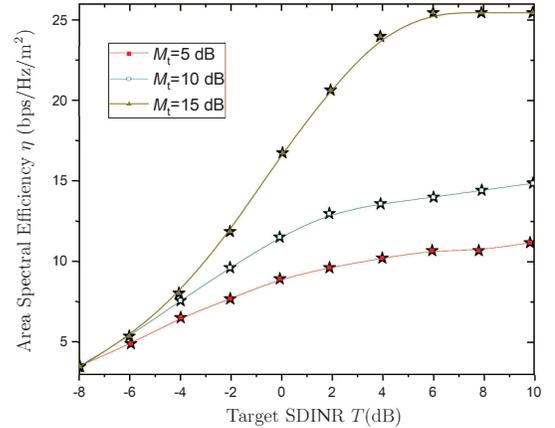}
 \caption{\footnotesize{  Area spectral efficiency of a MU-MIMO mmWave HetNet versus the target SDINR $T$ for varying the main lobe directivity gain $M_{\mathrm{t}}$ ($\kappa_{\mathrm{t}_{\mathrm{BS}}}=\kappa_{\mathrm{r}_{\mathrm{UE}}}=0.126$, $\xi=1.6\sigma^{2}$, and  $\delta=0.7$).}}
 \label{Fig6}
 \end{center}
 \end{figure}
 \subsection{Impact of Transmit Diversity Gain}
In Fig.~\ref{Fig6}, we have plot the ASE for  different values of the main lobe transmit diversity gain $M_{t}$. As $M_{t}$ increases, the ASE increases since the beams of mmWaves appear higher directivity. In fact, by doubling $M_{t}$,  the ASE increases by $2.6~\mathrm{bps/Hz/m^{2}}$, while by tripling $M_{t}$, the ASE increases almost 14 units. Hence, the increase of the directivity gain of the main lobe offers high positive contribution to the ASE. The study of the other parameters concerning the antennas arrays is left for future work. 
\subsection{Impact of Numbers of Users and Antennas}
Figs.~\ref{Fig7} and~\ref{Fig8}, assuming certain values for  the hardware impairments and channel aging, shed light into the impact of users and BS antennas, respectively. It is important to mention that $K$ and $M$ affect the severity of the hardware impairments as can be seen from Proposition 1 that relates the various terms with $K$ and $M$. In particular, in Fig. 7, having set $K=2$ and increasing the number of BS antennas, we observe an improvement in terms of the  coverage probability as expected. On the other 
hand, Fig. 8 shows the impact of increasing $K$ on the coverage probability, i.e., $p_{c} $ decreases and gets the lowest value in the case of space division multiple access (SDMA).  These results have been already reported from other similar works~\cite{Dhillon2013,PapazafComLetter2016,Papazafeiropoulos2017}.  In fact, we confirm that serving less users is preferable.
\begin{figure}[!h]
 \begin{center}
 \includegraphics[width=0.8\linewidth]{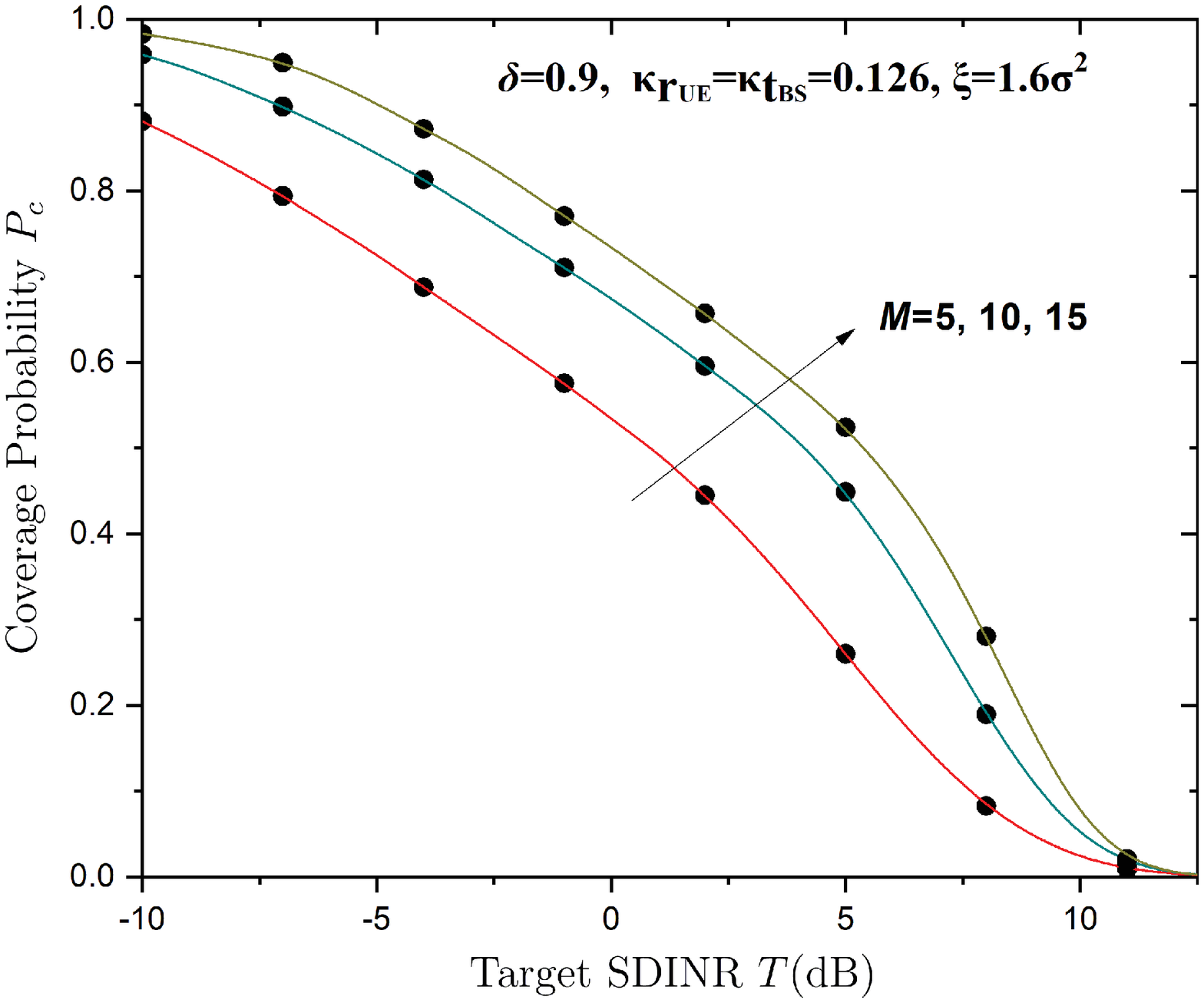}
 \caption{\footnotesize{  Coverage probability of a MU-MIMO mmWave HetNet versus the target SDINR $T$ for varying  number of BS antennas $M$ ($\kappa_{\mathrm{t}_{\mathrm{BS}}}=\kappa_{\mathrm{r}_{\mathrm{UE}}}=0.126$, $\xi=1.6\sigma^{2}$, and  $\delta=0.9$).}}
 \label{Fig7}
 \end{center}
 \end{figure}
 \begin{figure}[!h]
 \begin{center}
 \includegraphics[width=0.8\linewidth]{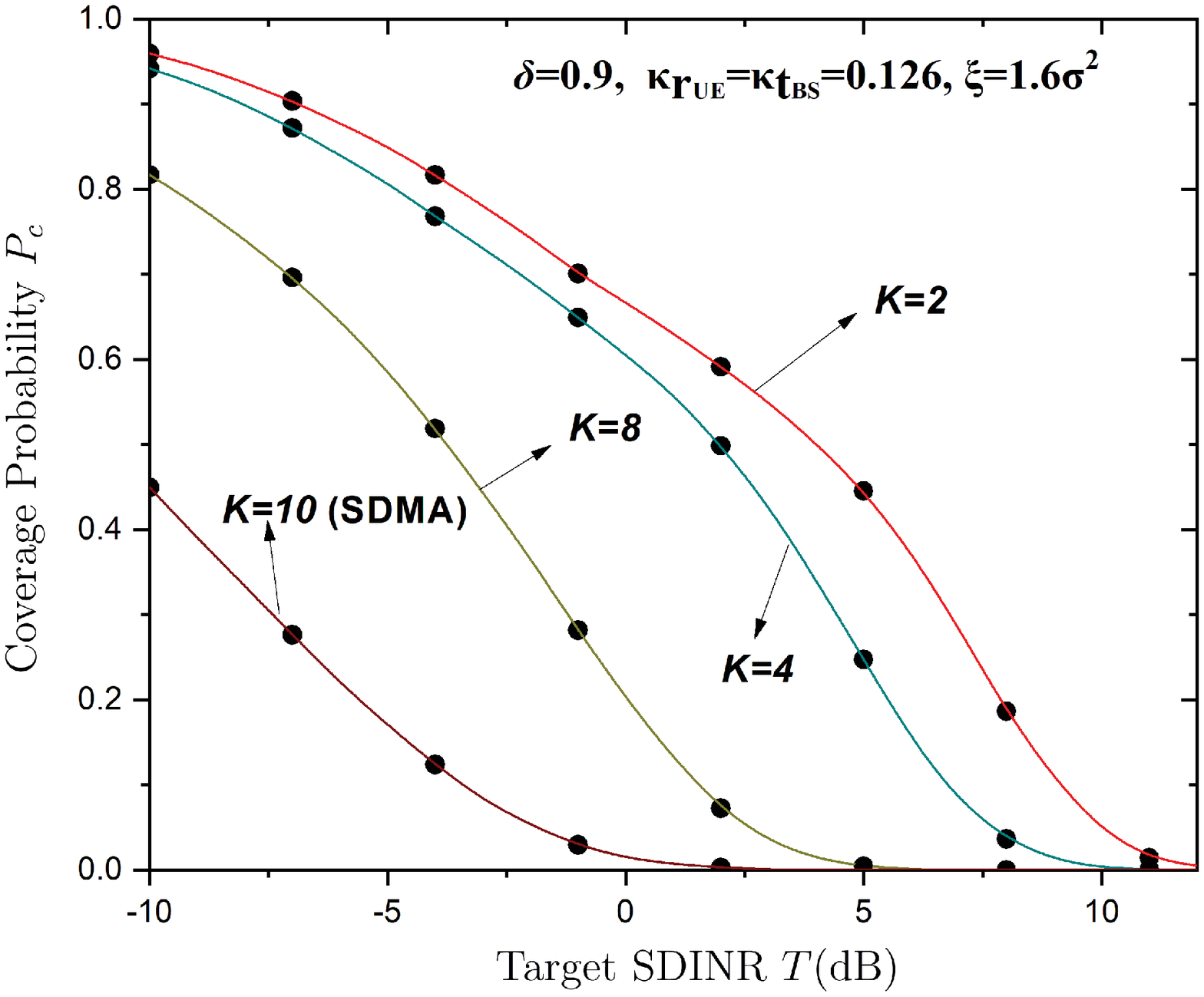}
 \caption{\footnotesize{  Coverage probability of a MU-MIMO mmWave HetNet versus the target SDINR $T$ for varying  number of users  $K$ ($\kappa_{\mathrm{t}_{\mathrm{BS}}}=\kappa_{\mathrm{r}_{\mathrm{UE}}}=0.126$, $\xi=1.6\sigma^{2}$, and  $\delta=0.9$).}}
 \label{Fig8}
 \end{center}
 \end{figure}

\section{Conclusion} \label{Conclusion} 
In this paper, we proposed a novel general framework to model realistic multi-tier MU-MIMO downlink  mmWave networks with PPP distributed BSs. First, we obtained the SDINR, and then, we derived the coverage probability as well as the ASE. As far as the authors are aware, this is the first work introducing the mmWave transmission in a realistic MIMO HetNet, where the RTHIs and channel aging are taken into account. The assessment of the practical potential of such system resulted in a comprehensive understanding.  The consideration of aggregation or repulsion e.g., in terms of Cox or Gibbs processes, respectively, is an interesting topic for future research in the areas of both mmWave transmission and impairments. Specifically,  in this work, we showed in terms of numerical results that  directional beamforming with sectored antenna  is preferable, while when the directivity collapses, the system does not behave efficiently. Moreover, we examined the impact of RATHIs and ATN, and we depicted that the former and the latter impairments become important at high SNR and low SNR, respectively. Another interesting observation is the demonstration of the degradation of the system by increasing user mobility.

\begin{appendices}
\section{Useful Lemmas~\cite{Bai2015}}\label{UsefulLemmas}
\begin{lemma}
Assuming that the typical user has at least one LOS BS, the conditional PDF of its distance to the nearest LOS BS is
\begin{align}
f_{\mathrm{L}}\left( x_{w} \right)=2\pi \lambda_{w} x_{w} p\left( x_{w} \right)\mathrm{e}^{-2\pi \lambda_{w}\int_{0}^{x_{w}}rp\left( r \right)\mathrm{d}r}/B_{\mathrm{L}_{w}},
\end{align}
where $B_{\mathrm{L}_{w}}=1-\mathrm{e}^{-2\pi \lambda_{w} \int_{0}^{\infty}rp\left( r \right)\mathrm{d}r}$ is the probability that a user observes at least one LOS BS, and $p\left( r \right)$ is the LOS PDF. In a similar case, given the
user has at least one NLOS BS, the conditional
PDF of the distance to the nearest NLOS BS is
\begin{align}
\!\!\!\!\!\!f_{\mathrm{N}}\!\left( x_{w} \right)\!=\!2\pi \lambda_{w} x_{w} \!\left( 1\!-\!p\left( x_{w} \right)\! \right)\!\mathrm{e}^{-2\pi \lambda_{w}\!\int_{0}^{x_{\!w}}\!\!r\left( 1-p\left( r \right)\! \right)\mathrm{d}r}\!\!/\!B_{\mathrm{N}_{w}},\!\!\!\!
\end{align}
where $B_{\mathrm{N}_{w}}=1-\mathrm{e}^{-2\pi \lambda_{w} \int_{0}^{\infty}r \left( 1-p\left( r \right) \right)\mathrm{d}r}$ is the probability that a user observes at least one NLOS BS.
\end{lemma}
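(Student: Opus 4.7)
The plan is to derive both conditional PDFs via the standard void-probability machinery for inhomogeneous PPPs, specialized to the independently thinned LOS and NLOS sub-processes of $\tilde{\Phi}_w$. The key observation, already built into the model, is that each outdoor BS at distance $r$ from the origin is independently LOS with probability $p(r)$ and NLOS with probability $1-p(r)$, with no correlation between links. By the independent thinning theorem for PPPs, the LOS sub-process $\Phi_{\mathrm{L}_w}$ is itself an inhomogeneous PPP on $\mathbb{R}^2$ with intensity measure $\Lambda_{\mathrm{L}}(B) = \lambda_w \int_B p(\|r\|)\,\mathrm{d}r$, and analogously $\Phi_{\mathrm{N}_w}$ is inhomogeneous with intensity $\lambda_w(1-p(\|r\|))$.

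Next, I would compute the unconditional law of the distance $R_{\mathrm{L}}$ to the nearest LOS BS. For an inhomogeneous PPP the event $\{R_{\mathrm{L}} > x_w\}$ coincides with the event that the ball $B(0,x_w)$ contains no point of $\Phi_{\mathrm{L}_w}$, and the void probability gives
\begin{equation}
\mathbb{P}(R_{\mathrm{L}} > x_w) = \exp\!\left(-2\pi \lambda_w \int_0^{x_w} r\,p(r)\,\mathrm{d}r\right),
\end{equation}
after reducing the 2D integral to polar coordinates. Setting $x_w \to \infty$ yields $\mathbb{P}(R_{\mathrm{L}} = \infty) = \exp(-2\pi \lambda_w \int_0^\infty r\,p(r)\,\mathrm{d}r)$, so the probability of observing at least one LOS BS is exactly $B_{\mathrm{L}_w}$ as defined in the statement.

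The third step is to condition on the existence of at least one LOS BS. The conditional CDF is
\begin{equation}
F_{\mathrm{L}}(x_w) = \frac{1 - \exp\!\bigl(-2\pi \lambda_w \int_0^{x_w} r\,p(r)\,\mathrm{d}r\bigr)}{B_{\mathrm{L}_w}},
\end{equation}
and differentiating with respect to $x_w$ via the Leibniz rule produces the stated expression for $f_{\mathrm{L}}(x_w)$. The NLOS case is identical after replacing $p(r)$ with $1-p(r)$ and $B_{\mathrm{L}_w}$ with $B_{\mathrm{N}_w}$; its validity follows because the NLOS thinning is also independent across points.

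There is no real obstacle here beyond a clean invocation of independent thinning and the void-probability formula for an inhomogeneous PPP; the only subtlety to handle carefully is the conditioning step, where one must verify that $B_{\mathrm{L}_w}$ (respectively $B_{\mathrm{N}_w}$) is strictly positive so the conditional density is well defined. This holds provided $\int_0^\infty r\,p(r)\,\mathrm{d}r > 0$, which is automatic for any non-degenerate LOS probability function such as the exponential $p(r) = e^{-\beta_w r}$ adopted in the paper.
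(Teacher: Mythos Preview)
Your proposal is correct and follows the standard void-probability derivation for thinned PPPs. Note that the paper does not actually prove this lemma: it is stated in Appendix~A under the heading ``Useful Lemmas~\cite{Bai2015}'' and simply imported from that reference, so there is no paper-side argument to compare against beyond the implicit one you have reconstructed.
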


\begin{lemma}\label{lemma3}
The probability that the user is associated with a
LOS BS is
\begin{align}
 A_{\mathrm{L}_{w}}=B_{\mathrm{L}_{w}}\int_{0}^{\infty}\mathrm{e}^{-2\pi\lambda_{w}\int_{0}^{\Psi_{\mathrm{L}}\left( x \right)}\left( 1-p\left( t \right) \right)t\mathrm{d}t}f_{\mathrm{L}}\left( x \right)\mathrm{d}x,
\end{align}
where $\Psi_{\mathrm{L}}\left( x_{w} \right)=\left( C_{\mathrm{N}}/C_{\mathrm{L}} \right)^{1/\al_{\mathrm{N}_{w}}}x_{w}^{\al_{\mathrm{L}_{w}}/{\al_{\mathrm{N}_{w}}}}$, while the probability that the user is associated with a NLOS BS is $A_{\mathrm{N}_{w}}=1-A_{\mathrm{L}_{w}}$.
\end{lemma}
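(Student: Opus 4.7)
The plan is to derive $A_{\mathrm{L}_w}$ by conditioning on the distance to the nearest LOS BS and then expressing the association event as a void region in the NLOS point process. By the independent thinning of $\tilde{\Phi}_w$ into LOS and NLOS components (with densities $p(r)\lambda_w$ and $(1-p(r))\lambda_w$ respectively), the two processes are independent PPPs, which is the key structural fact that makes the proof go through cleanly.

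First I would note that the user associates with the BS offering the smallest path-loss. Hence the association event $\{\text{typical user is served by a LOS BS}\}$ is equivalent to the joint event that (i) at least one LOS BS exists, and (ii) the path-loss of the nearest LOS BS is no larger than that of the nearest NLOS BS, i.e., $C_{\mathrm{L}} X_{\mathrm{L}}^{-\alpha_{\mathrm{L}_w}} \ge C_{\mathrm{N}} X_{\mathrm{N}}^{-\alpha_{\mathrm{N}_w}}$, where $X_{\mathrm{L}}$ and $X_{\mathrm{N}}$ are the distances to the nearest LOS and NLOS BSs. Solving for $X_{\mathrm{N}}$ gives $X_{\mathrm{N}} \ge (C_{\mathrm{N}}/C_{\mathrm{L}})^{1/\alpha_{\mathrm{N}_w}} X_{\mathrm{L}}^{\alpha_{\mathrm{L}_w}/\alpha_{\mathrm{N}_w}} = \Psi_{\mathrm{L}}(X_{\mathrm{L}})$, converting the association event into a void-region condition on the NLOS process.

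Next I would condition on $X_{\mathrm{L}}=x$, whose conditional PDF (given at least one LOS BS) is $f_{\mathrm{L}}(x)$ from Lemma 1, and use the independence between the LOS and NLOS PPPs (a consequence of independent thinning) to write
\begin{align}
A_{\mathrm{L}_w} \;=\; B_{\mathrm{L}_w}\!\int_{0}^{\infty}\!\mathbb{P}\bigl(X_{\mathrm{N}}\ge \Psi_{\mathrm{L}}(x)\bigr)\,f_{\mathrm{L}}(x)\,\mathrm{d}x.\nonumber
\end{align}
The factor $B_{\mathrm{L}_w}$ restores the unconditional probability of existence of a LOS BS. The void probability of the NLOS PPP, which has intensity measure $\mu_{\mathrm{N}}(B(0,r)) = 2\pi \lambda_w \int_0^r (1-p(t))t\,\mathrm{d}t$ in a disc of radius $r$, yields $\mathbb{P}(X_{\mathrm{N}}\ge \Psi_{\mathrm{L}}(x)) = \exp\!\bigl(-2\pi\lambda_w\!\int_0^{\Psi_{\mathrm{L}}(x)}(1-p(t))t\,\mathrm{d}t\bigr)$, matching the integrand in the claimed expression. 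The statement $A_{\mathrm{N}_w}=1-A_{\mathrm{L}_w}$ follows from the fact that, almost surely, the typical user has a strictly unique serving BS which is either LOS or NLOS.

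The main obstacle I anticipate is justifying rigorously the independence of $X_{\mathrm{L}}$ and $X_{\mathrm{N}}$: this requires invoking the independence-preserving property of independent thinning when the retention probability $p(r)$ depends only on location, so the resulting LOS and NLOS processes are independent (though inhomogeneous) PPPs. A minor subtlety is the treatment of the boundary case where both a nearest LOS and a nearest NLOS BS yield exactly equal path-loss, which has zero probability under continuous distributions and can be absorbed into either side of the inequality without affecting the result.
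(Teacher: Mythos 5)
Your proposal is correct: the paper itself does not prove this lemma but imports it verbatim from the cited reference (Appendix A is explicitly attributed to \cite{Bai2015}), and your argument reconstructs exactly the standard derivation used there --- independent thinning makes the LOS and NLOS processes independent inhomogeneous PPPs, the association event becomes the void event $\{$no NLOS BS within $\Psi_{\mathrm{L}}(X_{\mathrm{L}})\}$, and conditioning on $X_{\mathrm{L}}=x$ with density $B_{\mathrm{L}_{w}}f_{\mathrm{L}}(x)$ yields the stated integral. The points you flag (independence under location-dependent thinning, the measure-zero tie event) are handled correctly, so no gaps remain.
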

Assuming that a user is associated with a LOS BS, the PDF of the distance to its serving BS is
\begin{align}
\!\!\!\!\!\hat{f}_{\mathrm{L}}\left( x_{w} \right)=\frac{B_{\mathrm{L}_{w}}  f_{\mathrm{L}}\left( x_{w} \right)}{A_{\mathrm{L}_{w}}}\mathrm{e}^{-2\pi \lambda_{w}\int_{0}^{\Psi_{\mathrm{L}}\left( x_{w} \right)}\left( 1-p\left( t \right) \right)t\mathrm{d}t}/B_{\mathrm{L}_{w}}.
\end{align}
Taking for granted that the user can be served by a NLOS BS, the PDF of the distance to its serving is
\begin{align}
\hat{f}_{\mathrm{N}}\left( x_{w} \right)=\frac{B_{\mathrm{N}_{w}}  f_{\mathrm{N}}\left( x_{w} \right)}{A_{\mathrm{L}_{w}}}\mathrm{e}^{-2\pi \lambda_{w}\int_{0}^{\Psi_{\mathrm{N}}\left( x_{w} \right)}p\left( t \right) t\mathrm{d}t}/B_{\mathrm{N}_{w}},
\end{align}
where $\Psi_{\mathrm{N}}\left( x_{w} \right)=\left( C_{\mathrm{L}}/C_{\mathrm{N}} \right)^{1/\al_{\mathrm{L}_{w}}}x_{w}^{\al_{\mathrm{N}_{w}}/{\al_{\mathrm{L}_{w}}}}$.
\section{Proof of
Proposition~\ref{SINR}}\label{SINRproof}
Let us first define $\hat{\bV}_{w,n}=\bar{\bH}^{\H}_{w,n}\left( \bar{\bH}_{w,n} \bar{\bH}^{\H}_{w,n} \right)^{-1}$.
The PDF of the desired signal power, being in the numerator of the SDINR in~\eqref{SDIR1},  is $\Gamma\left( \Delta_{w},\sigma_{\hat{\bh}_{w}}^{2} \right)$ distributed with $\Delta_{w}=N_{w}-K_{w}+1$ because it can be written as
\begin{align}
Z_{w,n}=|\bar{\bh}_{w,n-1}^{\H}\bv_{w,n-1}|^{2}\cdot \|\bh_{w,n-1}\|^{2}. 
\end{align}
As can be seen, $Z_{w,n}$ is written as the product of two independent random variables distributed as $B\left( N_{w}-K_{w}+1,K_{w}-1 \right)$ and $\Gamma\left( N_{w},\sigma_{\hat{\bh}_{w}}^{2} \right)$,  respectively~\cite{Jindal2006}\footnote{ An equivalent description can be made by means of  the Erlang distribution with  shape and  scale parameters $\Delta_{w}$ 
and $\sigma_{\hat{\bh}_{w}}^{2}$, respectively}. Note that   $\|\hat{\bh}_{w,n-1}^{\H}\|^{2}\mathop \sim \limits^{\tt d}\Gamma[N_{w},\sigma_{\hat{\bh}_{w} }^{2}]$ since the random variable $\|\hat{\bh}_{w,n-1}^{\H}\|^{2}$  is the linear combination of $N_{w}$ i.i.d. exponential random variables each with variance  $\sigma_{\hat{\bh} }^{2}$. In a similar manner, the term including the error in the denominator can be in written as a sum of $K_{w}$ independent random variables   since 
\begin{align}
 \!\!\!\!\! \!E_{w,n}&=\beta_{w{\mathrm{z}}}\delta_{w}^{-2} \left(1+\kappa_{\mathrm{t}_{\mathrm{BS}}w} ^{2}\right)\big\|
                    \tilde{\bee}_{w,n}^{\H}\hat{\bV}_{w,n-1}
              \big\|^2\!\!\nn\\
              &=\!\! =\delta_{w}^{-2} \left(1+\kappa_{\mathrm{t}_{\mathrm{BS}}w} ^{2}\right)\sum_{i=1}^{K_{w}} \!\big|
                    \tilde{\bee}_{w,n}^{\H}\hat{\bv}_{wi,n-1}
              \big|^2\nn.
 \end{align}
Given that $\hat{\bv}_{wi,n-1}$ has unit norm  and  is independent of $ \tilde{\bee}_{w,n}$, $\big|  \tilde{\bee}_{w,n}^{\H}\hat{\bv}_{wi,n-1} \big|^2$ is the squared modulus of a linear combination of $K_{w}$ complex random variables   distributed as $\Gamma\left( 1,\sigma_{\hat{\bee}_{w} }^{2} \right)$. As a result, $\big\|\tilde{\bee}_{w,n}^{\H}\hat{\bV}_{w,n-1}\big\|^2$ is  $\Gamma\left( K_{w},\sigma_{\hat{\bee}_{w} }^{2} \right)$ distributed. Taking the expectation over the transmit and receive distortion noises, we have  $I_{ \etv_{w\mathrm{t}\mathrm{z},n}}=\beta_{w{\mathrm{z}}}\kappa_{\mathrm{t}_\mathrm{BS}w,n}^{2}\|{\bh}_{w,n}\|^{2}$ and $ I_{ \etv_{w\mathrm{r}\mathrm{z},n}}=\beta_{w{\mathrm{z}}}{\kappa_{\mathrm{r}_\mathrm{UE}w,n}^{2}}\|{\bh}_{w,n}\|^{2}$, which both follow a scaled  $\Gamma (N_{w},1)$ distribution. Moreover, the last term in the denominator, representing  the interference from other BSs $I_{\mathrm{z},n}$, can be written as $g_{jl,n}=|\bg_{jl,n}^{\H}\bV_{jl,n}|^2\sim \Gamma (K_{j},1)$ because $\bV_{jl,n}$ expresses  the precoding matrices of other BSs, having  unit-norm and being independent from  ${\bg}_{jl,n}$. Consequently, since $g _{jl,n} $  can be written as a linear combination of $K_{j}$ independent complex normal random variables with unit variance, we obtain that $g_{jl,n}\sim \Gamma (K_{j},1)$.
\section{Proof of Theorem~\ref{theoremCoverageProbability}}\label{CoverageProbabilityproof}
The proof starts with the description of $ p_{c}$ by the law of total probability. Specifically, taking into consideration that a user can be associated with a LOS or a NLOS BS, we have to provide the corresponding conditional coverage probabilities $ p_{c,\mathrm{L}}$ and $ p_{c,\mathrm{N}} $, respectively.  Hence, we have 
\begin{align}
p_{c}  =  p_{c,\mathrm{L}} +p_{c,\mathrm{N}}. 
\end{align}
The derivation of the conditional probabilities follows.
We start with the definition~\ref{def1} by focusing on the LOS BS case. We have
\begin{align}
 \!\!&p_{c,\mathrm{L}}\!\!\!=\!\EE\!\!\left[\!\mathds{1}\!\!\left( \underset{w \in \mathcal{W}}{\bigcup}	A_{\mathrm{L}_{w}}\underset{x_{w} \in \Phi_{L_{w}}}{\bigcup}\!\!\!\! \mathrm{SDINR}_{\mathrm{L}}\!\left(q_{w} , x_{w} \right)\!>\!\!\!T_{w} \right)\!\!  \right]\\
 &\le \EE\!\left[\! \underset{w \in \mathcal{W}}{\sum}	A_{\mathrm{L}_{w}}\underset{x_{w} \in \Phi_{L_{w}}}{\sum} \mathds{1}\left(\mathrm{SDINR}_{\mathrm{L}}\left( q_{w} ,x_{w} \right)>T_{w} \right)  \right]
 \label{coverage_definition0}\\
\! \!&=\!\!\! \underset{w \in \mathcal{W}}{\sum}A_{\mathrm{L}_{w}}\EE\!\left[	\underset{x_{w} \in \Phi_{L_{w}}}{\sum} \mathds{1}\left(\mathrm{SDINR}_{\mathrm{L}}\left( q_{w} ,x_{w} \right)>T_{w} \right)  \right]\label{coverage_definition}\\
\! \!&=\!\!\! \underset{w \in \mathcal{W}}{\sum}A_{\mathrm{L}_{w}}\EE\!\!\left[\!	\underset{x_{w} \in \Phi_{L_{w}}}{\!\!\sum}\!\!\!\! \mathbb{P}\left(
             Z_{w}>T_{w}\beta_{w\mathrm{L}}^{-1}\left( E_{w}+ 
           D_{w} +     I_{\mathrm{L}}  +\xi_{w,n}^{2} \right) \right) \! \right]\!
 \nn\\
\! \!&=\!\!\!\! \underset{w \in \mathcal{W}}{\sum}A_{\mathrm{L}_{w}}\EE\!\left[\int_{0}^{\infty}\! \mathbb{P}\left(
             Z_{w}\!>\!T_{w}\beta_{w\mathrm{L}}^{-1}\left( E_{w\mathrm{L}}\!+\! 
           D_{w\mathrm{L}}\! +\!
     I_{\mathrm{L}}+\xi_{w,n}^{2}   \right)\! \right) \!\! \right]\!\nn\\
     &~~~~~~~~\times\hat{f}_{\mathrm{L}}\!\left( x_{w} \right)\!\mathrm{d}x_{w}\label{coverage_definition100}
\end{align}
where $D_{w\mathrm{L}}=   I_{ \etv_{w\mathrm{t}\mathrm{L}}}
+I_{ \etv_{w\mathrm{r}\mathrm{L}}}$ expresses the total additive  distortion from both the transmitter and the receiver. Note that $E_{w}$ and $D_{w}$ do not depend on the BSs located at $R_{jl}$ far from the  typical user. In~\eqref{coverage_definition0}, we have applied  the Boole's inequality (union bound). Then, we substitute the SDINR in~\eqref{coverage_definition}. Next, we  employ the Campbell-Mecke Theorem~\cite{Chiu2013a}, and use the fact that
$Z_{w}$ is Gamma distributed with PDF given by $\mathbb{P}\left(Z_{w}>z  \right)=e^{-z}\displaystyle\sum_{i=0}^{\Delta_{w}-1}\frac{z^{i}}{i!}$. Hence, its PDF is provided by~\eqref{eq PDF1 1}. Now, 
\eqref{coverage_definition100} becomes
\begin{align}
 &\!\!\!p_{c,\mathrm{L}} \le \!\!\! \underset{w \in \mathcal{W}}{\sum}\!A_{\mathrm{L}_{w}}\int_{0}^{\infty} \!\!\!\EE\bigg[ e^{- \frac{T_{w} \beta_{w\mathrm{L}}^{-1}\left( E_w+D_{w}+ \xi^{2}\right)}{\sigma_{\hat{\bh}_{w}}^{2} } }e^{-\frac{{T}_{w}\beta_{w\mathrm{L}}^{-1} I_{\mathrm{L} }}{{\sigma_{\hat{\bh}_{w}}^{2} }}}\!\!\!\nn\\
&\times \!
\sum_{i=0}^{\Delta-1}\sum_{u=0}^{i}\!\binom{i}{u}\frac{\left(  {T}_{w}  \beta_{w\mathrm{L}}^{-1}\left( E_w\!+\!D_{w}\!+\!\xi^{2} \right)  \right)^{i-u}\!\left(     {T}_{w} \beta^{-1}_{w\mathrm{L}} I_{\mathrm{L}}  \right)^{u}}{i!\left( \sigma_{\hat{\bh}_{w}}^{2}  \right)^{i}}\bigg]\nn\\
&\times\hat{f}_{\mathrm{L}}\!\left( x_{w} \right)\mathrm{d}x_{w} \label{coverage1}\\
&
\!\!\!=\!\underset{w \in \mathcal{W}}{\sum}A_{\mathrm{L}_{w}}\sum_{i=0}^{\Delta_{w}-1}\!\sum_{u=0}^{i}\!\sum_{u_1+u_2+u_3+u_4=i-u}e^{-s_{w\mathrm{L}}\xi^{2}}\nn\\
&\times\int_{0}^{\infty}\!\!\!\binom{i}{u}\!\!\binom{i\!-\!u}{u_1+u_2+u_3+u_4}\frac{\left( -1 \right)^{i}\!s_{w\mathrm{L}}^{u}\left(  s_{w\mathrm{L}}\xi^{2} \right)^{u_{4}} }{i!}\!\nn\\
&\times\frac{\mathrm{d}^{u_1}}{\mathrm{d}s_{w\mathrm{L}}^{u_1}}\mathcal{L}_{E_{w\mathrm{z}} }\!\left( s_{w\mathrm{L}} \right)\!\!\! \frac{\mathrm{d}^{u_2}}{\mathrm{d}s_{w\mathrm{L}}^{u_2}}\mathcal{L}_{I_{\etv_{t}w\mathrm{L}}}\!\!\left(s_{w\mathrm{L}} \right)\frac{\mathrm{d}^{u_3}}{\mathrm{d}s_{w\mathrm{L}}^{u_3}}\mathcal{L}_{I_{\etv_{r}w\mathrm{L}}}\!\!\left(s_{w\mathrm{L}} \right)\! \nn\\
&\times\frac{\mathrm{d}^{u}}{\mathrm{d}s_{w\mathrm{L}}^{u}} \mathcal{L}_{I_{\mathrm{ L}}}\!\!\left(s_{w\mathrm{L}} \right)\hat{f}_{\mathrm{L}}\!\left( x_{w} \right)\mathrm{d}x_{w},\!\label{coverage3}
\end{align}
where  $A_{\mathrm{L}_{w}}$  in~\eqref{coverage1} is defined in Lemma~\ref{lemma3}, and we have applied the Binomial theorem. Setting  $ s_{w\mathrm{L}}=\frac{{T}_{w}\beta_{w\mathrm{L}}^{-1}}{\sigma_{\hat{\bh}_{w}}^{2}} $, and using the Multinomial theorem, we result in~\eqref{coverage3}. Actually, we have  taken  the inner sum  over all combinations of nonnegative integer indices $u_1$ to $u_3$ constraining the sum  $u_1+u_2+u_3$ to $i-u$. In addition, we have applied the definition of the Laplace transform $\mathbb{E}_{I }\left[  e^{-s I }\left( s I  \right)^{i}\right]=s^{i}\mathcal{L}\{t^{i}g_{I }\left( t \right)\}\left( s \right)$ and the Laplace identity $t^{i}g_{I }\left( t \right)\longleftrightarrow \left( -1 \right)^{i}\frac{\mathrm{d}^{i}}{\mathrm{d}s^{i}}\mathcal{L}_{I }\{g_{I }\left( t \right)\}\left( s \right)$. 
As far as the   Laplace transforms  $\mathcal{L}_{E_{w\mathrm{L}}}\left( s \right)$, $\mathcal{L}_{I_{\etv_{t}w\mathrm{L}}}\!\left(s \right)$, and $\mathcal{L}_{I_{\etv_{r}w\mathrm{L}}}\!\left(s \right)$ are concerned, these are provided by means of Lemma~\ref{LaplaceTransformGamma}. Finally, the  Laplace transform $\mathcal{L}_{I_{ \mathrm{L}}}\!\!\left(s \right)$ is obtained by Proposition~\ref{LaplaceTransform}. 
The conditional coverage probability $p_{c,\mathrm{N}} $, denoting the association with the NLOS BSs, is obtained after following the same approach with $p_{c,\mathrm{L}} $. In order to avoid any repetition, we omit the details.

\section{Proof of Proposition~\ref{LaplaceTransform}}\label{LaplaceTransformproof}
In order to derive the Laplace transform of the interference power $g_{jl}$ when the typical user is associated with a LOS BS, it is necessary to employ its PDF, which has been characterised in Appendix~\ref{SINRproof} as $\Gamma (K_{j},1)$. In other words, the PDF depends on the number of users, which is assumed that it is the same across all the cells in tier $j$. Note that we omit the index $\mathrm{L}$ from some variables for the sake of simplicity. Specifically, we have
\begin{align}
 &\!\!\mathcal{L}_{I_{ }}\!\!\left(  s_{w } \right)\!=\!\mathbb{E}_{I_{jl}}\!\!\left[  e^{-s_{w } {I_{jl}}}\right]\!=\!\mathbb{E}_{\Phi_{{B}_{j}},I_{Lj},I_{Nj}}\!\!\left[\!  e^{-s_{w }\!\! \sum{j \in \mathcal{W}}\left(I_{Lj}+I_{Nj} \right)}\!\right]\nn\\
  &\!\mathop = \prod_{j \in \mathcal{W}}\mathbb{E}_{\Phi_{B_{j}},I_{Lj},I_{Nj}}\!\left[ e^{-s_{w } \left(I_{Lj}+I_{Nj} \right)} \right] \nn\\
  &\!\mathop = \prod_{j \in \mathcal{W}}\mathbb{E}_{\Phi_{L_{j}},I_{Lj}}\!\left[ e^{-s_{w } I_{Lj} } \right]
  \mathbb{E}_{\Phi_{N_{j}},I_{Nj}}\!\left[ e^{-s_{w } {Nj} } \right],
  \label{laplace 21}
\end{align}
where $s_{w}=\frac{{T}_{w}\beta_{w}^{-1}}{\sigma_{\hat{\bh}_{w}}^{2}} R_{w}^{-\al}$, and~\eqref{laplace 21} results from the independence betweem the LOS and NLOS BSs in terms of location and  powers of the corresponding fading distributions.  Now, we focus on the term describing the LOS interfering links. We have
\begin{align}
&\!\!\!\!\mathbb{E}\!\left[ \!e^{-s_{w }\! I_{Lj} } \!\right]\!\!=\!\EE\!\left[\! e^{-s_{w }\! \sum_{l:X_{jl}\in \Phi_{\mathrm{L}_{j}}\cap\bar{B}_{j}\left( 0, x_{j} \right)}{\!G_{jl} C_{\mathrm{L}_{j}}R_{jl}^{-\al_{\mathrm{L}_{j}}}\! g_{jl}}} \!\right]\label{laplace 20}\!\!\\
&=e^{\left( -{2\pi\lambda_{w}}\sum_{k=1}^{4}b_{k}\int_{x}^{\infty} \left( 1-\mathcal{L}_{g_{jl}}\left(C_{\mathrm{L}_{j}} s_{w }  \left( \frac{y}{t} \right)^{-\al_{\mathrm{L}}} \right)   \right)p\left( t \right)\mathrm{d}t\right)}\label{laplace 4}\\
&=\prod_{k=1}^{4}e^{\!\left( -{2\pi\lambda_{w}}b_{k}\int_{x}^{\infty} \left( 1-\frac{1}{\left( 1+\frac{\bar{f}_{wk}C_{\mathrm{L}_{j}}s_{w }}{K_{j}}\left( \frac{y}{t} \right)^{-\al_{\mathrm{L}}} \right)^{K_{j}}}  \right)p\left( t \right)\mathrm{d}t\right)},\label{laplace 5}
\end{align}
where in~\eqref{laplace 20} we  have substituted $I_{Lj}$. Note that $I_{Lj}=\sum_{l:X_{jl}\in \Phi_{\mathrm{L}_{j}}\cap\bar{B}_{j}\left( 0, x_{j} \right)}{G_{jl} C_{\mathrm{L}_{j}}R_{jl}^{-\al_{\mathrm{L}_{j}}} h_{jl}}$ and $I_{Nj}=\sum_{l:X_{jl}\in \Phi_{\mathrm{N}_{j}}\cap\bar{B}_{j}\left( 0, \psi_{\mathrm{L}_{j}\left( x_{j} \right)} \right)}{G_{jl} C_{\mathrm{N}_{j}}R_{jl}^{-\al_{\mathrm{N}_{j}}} h_{jl}}$ are the interference powers from the LOS and NLOS BS of the $j$th tier, respectively. Next, we apply the property of the probability generating functional (PGFL) regarding the PPP~\cite{Chiu2013a}  to obtain~\eqref{laplace 4}. Moreover, we substitute the Laplace transform of the fading distribution
following a Gamma distribution with $K_{j}$ parameter, and $\bar{f}_{wk}=\frac{a_{wk}}{M_{rw}M_{tw}}$. Similarly, the other term concerning the NLOS interfering links is written as
\begin{align}
&\!\!\!\!\mathbb{E}\!\left[ \!e^{-s_{w }\! I_{Nj} } \!\right]\!\!=\!\EE\!\left[\! e^{-s_{w }\! \sum_{l:X_{jl}\in \Phi_{\mathrm{N}_{j}}\cap\bar{B}_{j}\left( 0, \psi_{j}(x_{j}) \right)}{\!G_{jl} C_{\mathrm{N}_{j}}R_{jl}^{-\al_{\mathrm{N}_{j}}}\! g_{jl}}} \!\right]\nn\!\!\\
&=e^{\left( -{2\pi\lambda}_{w}\sum_{k=1}^{4}b_{k}\int_{\psi_{j}(x_{j})}^{\infty} \left( 1-\mathcal{L}_{g_{jl}}\left(C_{\mathrm{N}_{j}} s_{w }  \left( \frac{y}{t} \right)^{-\al_{\mathrm{N}}} \right)   \right)p\left( t \right)\mathrm{d}t\right)}\nn\\
&\!=\!\prod_{k=1}^{4}\!\!e^{\!\!\left(\!\! -{2\pi\lambda_{w}}b_{k}\int_{\psi_{j}(x_{j})}^{\infty}\! \left(\!\! 1-\frac{1}{\left( 1+\frac{\bar{f}_{wk} C_{\mathrm{N}_{j}}s_{w }}{K_{j}}\left( \frac{y}{t} \right)^{-\al_{\mathrm{N}}} \right)^{K_{j}}}  \!\!\right)\!\left( 1-p\left( t \right) \right)\mathrm{d}t\!\!\right)}\!\!.\!\label{laplace 51}
\end{align}
Substituting~\eqref{laplace 5} and~\eqref{laplace 51} into~\eqref{laplace 21}, we conclude the proof.

 \end{appendices}

\bibliographystyle{IEEEtran}

\bibliography{mybib}

\end{document}